\documentclass{amsart}       
\usepackage{graphicx}
\usepackage{mathptmx}

\usepackage{latexsym}
\usepackage{amsmath}
\usepackage{epsfig}
\usepackage{amssymb}
\usepackage{enumerate}
\usepackage{xcolor}

\usepackage{amsmath, amsfonts, amssymb} 
\usepackage{mathtools}
\usepackage{enumitem}

\usepackage[numbers, sort&compress]{natbib}
\usepackage[colorlinks]{hyperref}
\hypersetup{
         citecolor=blue,
    linkcolor=red,
}

\parskip=1em
\setlength{\unitlength}{1mm}

\usepackage{color}
\usepackage{xcolor}

\newcommand{\change}{\textcolor{black}}
\newcommand{\orange}{\textcolor{black}}
\newtheorem{theorem}{Theorem}[section]

\newtheorem{corollary}[theorem]{Corollary}

\newtheorem{proposition}[theorem]{Proposition}

\usepackage{algorithm}
\usepackage[noend]{algpseudocode}

\begin{document}

\title{Autopolyploidy, allopolyploidy, and phylogenetic networks with horizontal arcs}	
	
	%
	
	

\author{Katharina T. Huber \and Liam J. Maher}
         \address{School of Computing Sciences, University of East Anglia,
Norwich, UK}
   \email{K.Huber@uea.ac.uk}
         \address{School of Computing Sciences, University of East Anglia,
Norwich, UK }
  \email{L.Maher@uea.ac.uk}
	
	\date{\today}
	
\maketitle

\begin{abstract}
Polyploidization is an evolutionary process by which a species \orange{acquires multiple copies of} its complete set of chromosomes. The reticulate nature of the signal left behind by it means that phylogenetic networks offer themselves as a framework to reconstruct the evolutionary past of species affected by it. The main strategy for doing this is to first construct a so called multiple-labelled tree and to then somehow derive such a network from it. \change{The following question therefore arises}: How much can be said about that past if such a tree is not readily available? By viewing a polyploid dataset as a certain vector which we call a ploidy (level) profile we show that, among other results, there always exists a phylogenetic network in the form of a beaded phylogenetic tree with additional arcs that realizes a given ploidy profile. Intriguingly, the two end vertices of almost all of these additional arcs can be interpreted as having co-existed in time thereby adding biological realism to our network, a feature that is, in general, not enjoyed by phylogenetic networks.  In addition, we show that our network may be viewed as a generator of ploidy profile space, a novel concept similar to phylogenetic tree space that we introduce to be able to compare phylogenetic networks that realize one and the same ploidy profile. We illustrate our findings in terms of a publicly available Viola dataset.
\end{abstract}



\section{Introduction}
Polyploidization is an evolutionary phenomenon thought to be one of the key players in plant evolution.
It has, however, also been observed in fish \cite{LI03}, and 
fungi \cite{AM12} and
arises when a species \orange{acquires multiple copies of} its full set of chromosomes. \orange{This can be the result of, 
for example,  a species undergoing whole genome duplication (autopolyploidization)  or through acquisition} of a further complete set of chromosomes via interbreeding with a different, usually closely related, species
(allopolyploidization) \orange{\cite{AM12}} (see also \cite{DS-B17} \orange{who point out that the definitions of allopolyploidy and autopolyploidy are controversial}). 
 Examples of autopolyploids include crop potato
\cite{TPSC11} and bananas and watermelon \cite{VBDG00}
and examples of allopolyploids include bread wheat \cite{M-et-al14} and oilseed rape. Understanding better how polyploids have arisen (and still arise) therefore has potentially far reaching consequences.

Many tools for shedding light into the evolutionary past of a polyploid data set such as PADRE \cite{LSHM2009} and AlloPPnet \cite{JSO13} start with a multiple-labelled tree, 
sometimes also called a MUL-tree or a multi-labelled tree.
\begin{figure}[h!]
	\centering
		\includegraphics[scale=0.2]{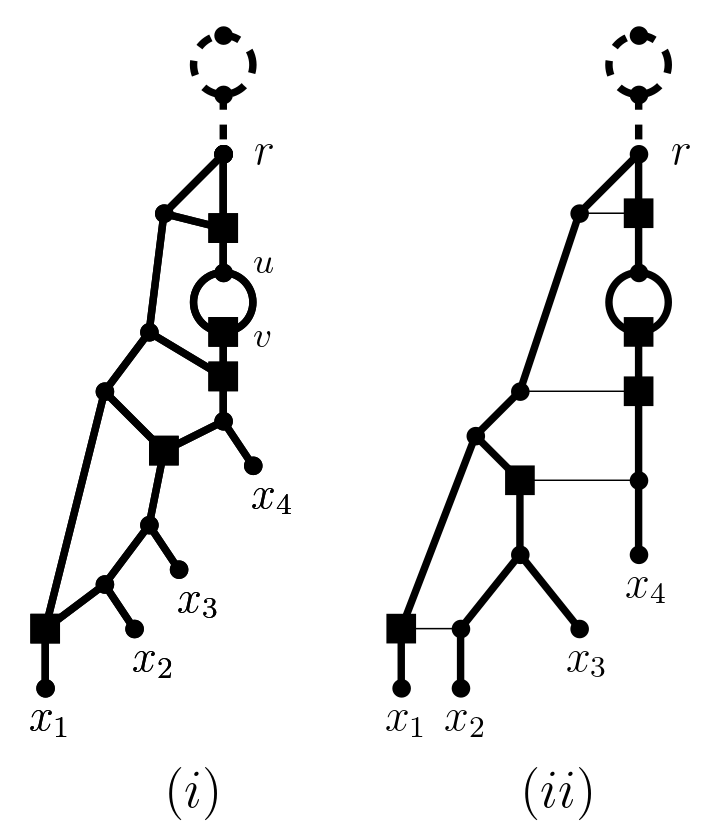}
	\caption{\label{fig:phy-network}
				(i) One of potentially many phylogenetic networks that realize
		the ploidy levels $14, 12, 12, 10$ 
		of a set $X=\{x_1,x_2,x_3,x_4\}$ of taxa where $14$ is the ploidy level of $x_1$, the ploidy level of $x_2$ and $x_3$
		is $12$, respectively and the ploidy level of $x_4$ is $10$. 
		To improve clarity of exposition, we always assume that unless indicated otherwise, arcs are directed away from the root
		(which is always at the top). 
		(ii) The network in (i) represented  in such a way that 
		every reticulation vertex (indicated throughout the paper
		by a square and defined below)
		has precisely one incoming horizontal arc implying that
		the end vertices of such an arc represent ancestral species that have existed at the same point in time. 
		In both (i) and (ii) the phylogenetic network resulting from deleting the dashed bead and its dashed outgoing arc
		realizes the ploidy profile $\vec{m}=(7, 6, 6, 5)$. 
		}
\end{figure}
These types of trees differ from the standard phylogenetic trees by allowing two or more leaves to be labelled with the same species. In the case of PADRE a (phylogenetic) network is then produced from such a tree by folding it up as described in, for example, \cite{HOLM06}. Referring the interested reader to Figure~\ref{fig:phy-network}(i) for an example and below for definitions, it suffices to
say at this stage that a phylogenetic network is a directed graph with leaf set a set of taxa \orange{(e.g. species)} of interest,
a single root (usually drawn at the top), and no directed cycles. Note that to be able to account for autopolyploidy, we
deviate from the standard definition of a phylogenetic network
(see e.\,g.\,\cite{S16}) by also allowing it to contain {\em beads}, that is, \orange{pairs of} parallel
arcs, as is the case in the networks depicted in Figure~\ref{fig:phy-network}.
Polyploidization events  are represented in such networks as {\em reticulation vertices}, that is, vertices with more
than one arc coming into them. For clarity of exposition, 
we indicate \orange{reticulation vertices}  throughout the paper in terms of squares.
Although PADRE is generally fast and not constrained by an upper limit on the ploidy levels in a dataset of interest, its underlying assumptions  imply that it is highly susceptible to noise in the multiple-labelled tree from which
\orange{the network} is constructed. In the case of AlloPPnet, a phylogenetic network is inferred using, among other techniques, the multispecies coalescent to account for incomplete lineage sorting.
The computational demands of AlloPPnet however mean that it can only be applied on relatively small data sets that contain only diploids and tetraploids \cite{R21}.

\orange{One approach to obtain an input  multiple-labelled tree for PADRE is to try and  construct it as a consensus multiple-labelled tree from a set of multiple-labelled gene trees. This task is relatively straight-forward for phylogenetic trees by applying, for example, some kind of consensus approach to the collection of clusters induced by the trees. The corresponding approach for constructing a consensus multiple-labelled tree from a collection of multiple-labelled gene trees however gives rise to a computationally hard decision problem \cite{HLMS09}. A consensus multiple-labelled tree might therefore not always be readily available for a dataset.}
The \orange{following} question therefore \orange{arises}: How much can we say about the \orange{reticulate} evolutionary past of a polyploid dataset if a multiple-labelled tree is not readily  available?  \orange{Since one of the signatures left by polyploidization is the {\em ploidy level} of a species (i.\,e.\,the number of copies of the complete set of chromosomes of that species),}
we address this 
question in terms of a dataset's ploidy levels or more precisely the ploidy levels of the taxa that make up the dataset \orange{using phylogenetic networks as a framework. Interpreting the ploidy level of a \orange{species} as the number of directed paths from the root of a phylogenetic network $N$ to the leaf in  $N$ that represents that \orange{species},  Figure~\ref{fig:phy-network}(i) implies that, in general, ploidy levels do not preserve the topology of the phylogenetic network that induced them. For example,  swapping $x_2$ with $x_3$ in that network results  in a phylogenetic network that induces the same ploidy levels on $\{x_1,\ldots, x_4\}$ as the network pictured in Figure~\ref{fig:phy-network}(i). We are therefore interested in understanding to what extent a phylogenetic network representing the evolutionary past 
	of a polyploid dataset can be derived solely from the ploidy levels of the} species that make up the dataset.

Note that since polyploidization events are assumed to be rare, we are particularly interested in phylogenetic networks that enjoy this property and also aim to minimize the number of reticulation vertices. From the perspective of reducing the complexity of our mathematical arguments this immediately implies that we may assume the ploidy level of a taxon to \orange{not be even}. Indeed, if we have a dataset where every ploidy level is of the form $m=2m'$, some positive integer $m'$, \orange{then since polyploidization events are assumed to be  rare}, we may assume the last common ancestor of the dataset's taxa to have undergone autopolyploidization. The root of a phylogenetic network $N$ that represents the evolutionary past of the dataset is therefore contained in a bead and that bead accounts for the factor $2$ in $m$. Thus, the phylogenetic network 
obtained by removing this bead and the arc that joins it to the rest of $N$ is a phylogenetic network that represents the factor $m'$ of $m$ in terms of numbers of directed paths from the root to the leaves. 

In view of the above, we call any (finite) vector of positive integers that is indexed by a (finite, non-empty) set $X$  a {\em ploidy profile (on $X$)}.  Although related to the recently introduced {\em ancestral profiles} \cite{ESS19} (but also see \cite{BESS21}) ploidy profiles differ from them by only recording the number of directed paths from the root of a phylogenetic network $N$ to every leaf of $N$. \orange{Ancestral} profiles \orange{on the other hand} record the number of directed paths from every non-leaf vertex in 
$N$ to all the leaves below that vertex. In particular, a ploidy profile is an element of an ancestral profile of a phylogenetic network.

To help motivate our approach for addressing our question, consider the phylogenetic network $N$ with leaf set $X=\{x_1,x_2,x_3,x_4\}$ depicted in Figure~\ref{fig:phy-network}(i) where the square vertices at the end of each pair of two parallel arcs represent autopolyploidization and the remaining four square vertices represent allopolyploidization. Then taking for each taxon
$x$ in $X$ the number of directed paths from the root of the network to $x$ results in the ploidy profile $\vec{m}=(14,12,12,10)$ where 
the first component is indexed by $x_1$, the second by $x_2$ and so on. Each component in $\vec{m}$ is of the form $2m$, some positive integer $m$, and the phylogenetic network rooted at $r$ obtained by removing the dashed bead together with the dashed arc coming into $r$ represents the ploidy profile $\vec{m'}=(7,6,6,5)$ in terms of numbers of directed paths from $r$ to the leaves. With this in mind, we say that
 a ploidy profile $\vec{m}=(m_1,\ldots, m_n)$, $n\geq 1$ on  $X=\{x_1,\ldots, x_n\}$ is 
{\em realized} by a phylogenetic network $N$ with leaf set $X$
if, for all $1\leq i\leq n$, the number of directed paths from
the root of $N$ to $x_i$ is $m_i$. For example, both phylogenetic networks pictured in Figure~\ref{fig:phy-network} realize the ploidy profile $\vec{m} = (14,12,12,10)$ indexed by $X = \{x_1,x_2,x_3,x_4\}$. 

\orange{Contributing to the emerging field of {\em Polyploid Phylogenetics} \cite{R21},
a first inroad into our question was made in \cite{HM21}}
by studying the hybrid number of a ploidy profile $\vec{m}$, that
is, the minimal number of polyploidization events required by a phylogenetic network to realize $\vec{m}$. 
As it turns out, the arguments underlying the results in \cite{HM21} 
largely rely on a certain iteratively constructed network
that realizes $\vec{m}$. Denoting for a choice \orange{$C$} of initializing network the generated network by \orange{$N(\vec{m})=N_C(\vec{m})$} and changing the network initializing that construction in a way that does not affect the main findings in \cite{HM21} (see below for details), 
we show that even more can be said about ploidy profiles.  For example, our first result (Proposition~\ref{prop:connected}) shows that 
$N(\vec{m})$ may be thought of as a generator of ploidy profile space (defined in a similar way as phylogenetic tree space) in the sense that any realization of $\vec{m}$ can be reached from 
 $N(\vec{m})$ via a number of multiple-labelled tree editing operations and reticulation vertex splitting operations. As an immediate consequence of this we obtain  
a distance measure for phylogenetic networks that realize one and the same ploidy profile.
On a more speculative level it might be interesting to
see if  $N(\vec{m})$ lends itself as a useful prior for a
Bayesian method along the lines as described in \cite{IJJM21}.

Our second result (Theorem~\ref{theo:specialcase}) shows that a key concept introduced in \cite{HM21} called the simplification sequence of a ploidy profile $\vec{m}$ is in fact closely related to the notion of a cherry reduction sequence \cite{ESS19} for $N(\vec{m})$, also called a cherry picking sequence in 
\cite{JM21}. In case autopolyploidy is not suspected to have played a role in the evolution of a dataset, this implies that the network $N(\vec{m})$ can also be reconstructed from phylogenetic networks on three leaves called trinets \cite{ST21}. These can be obtained from a dataset using, for example, the TriLoNet software \cite{OWvIM16}.

Exemplified in terms of the phylogenetic network depicted in
Figure~\ref{fig:phy-network}(ii) for the ploidy profile $(14,12,12,10)$, our third result (Theorem~\ref{theo:treebased}) implies that
for any ploidy profile we can always find a phylogenetic network realizing it in the form of a phylogenetic tree that potentially contains  beads
to which additional arcs have been added and at most one of those arcs is not horizontal. 
In the context of this it is important to note that, in general, a phylogenetic network cannot be thought of as a phylogenetic tree with additional arcs let alone horizontal ones. The reason for this is that horizontal arcs imply that the ancestral taxa joined by such an arc must have existed at the same time (see also \cite[Section 10.3.3]{S16} for more on this and the Viola dataset below for an example).

 The remainder of the paper is organized as follows. In the next section, we review relevant basic terminology surrounding graphs, phylogenetic networks and ploidy profiles. For a ploidy profile $\vec{m}$, we outline the construction of the network $N(\vec{m})$ in \orange{Section~\ref{sec:realization}}. 
This includes the definition of the simplification sequence for $\vec{m}$.
Subsequent to this, we  introduce ploidy profile space 
\orange{in Section~\ref{sec:comparison}} and \orange{also} establish Proposition~\ref{prop:connected}
\orange{in that section}.
Sections~\ref{sec:weakly-orchard-and-beyond} is concerned with establishing
Theorems~\ref{theo:specialcase} and \ref{theo:treebased}. To do this, we use Theorem~\ref{theo:arc-rich} \orange{which we establish in Section~\ref{sec:weak-hgt-consistent}. That theorem}
is underpinned by the concept of a so called HGT-consistent labelling introduced in \cite{IJJMZ20},
a notion that we extend to our types of phylogenetic networks here. 
 In the last but one section, we  employ a simplified version of a Viola dataset from \cite{MJDBBBO12} to help explain our 
 findings within the context of  a real biological dataset.
 We conclude with some potential directions of further research in the last section.

\section{Preliminaries} 
 \label{sec: preliminaries}
 We start with introducing basic concepts surrounding 
 phylogenetic networks. 
 Throughout the paper, we assume that $X$ is a (finite) set
 that contains at least one element. Also, we denote the number of elements in $X$ by $n$.

 \subsection{Graphs}
 \label{sec:graphs}
 Suppose for the following that $G$ is a directed acyclic graph
 with a single root which might
 contain parallel arcs but no loops. 
 We denote an arc starting at a vertex $u$ and ending in a vertex $v$ by $(u,v)$. If there exist \change{two} arcs from $u$ to $v$ then we refer to the pair of arcs from $u$ to $v$ as a {\em bead} of $G$.

 Suppose $v$ is a vertex of $G$. Then  
 we refer to the number of arcs coming into $v$
 as the {\em indegree} of $v$ in $G$ and denote it by $indeg(v)$.
 Similarly, we call the number of outgoing arcs of $v$ the
 {\em outdegree} of $v$ in $G$ and denote it by $outdeg(v)$. We call $v$ the {\em root} of $G$, if $indeg(v)=0$,  and we 
 call $v$ a {\em leaf} 
 of $G$ if $indeg(v)=1$ and $outdeg(v)=0$. We denote 
 the set of vertices of $G$ by $V(G)$ and the set of leaves of $G$ by $L(G)$. 
 We call $v$
 a {\em tree vertex} if $outdeg(v)=2$ and $indeg(v)=1$, and
 we call $v$ a {\em reticulation vertex} if $indeg(v)=2$ and 
 $outdeg(v)=1$. 
 If $w$ is also a vertex in $G$ then we say that $w$
 is {\em below} $v$ if either $v=w$ or there exists a directed 
 path from the root of $G$ to $w$ that crosses $v$. If $w$ is
 below $v$ and $v\not=w$ then we say that $w$ is 
 {\em strictly below} $v$. A {\em parent} of a vertex $v$ is the vertex connected to $v$ on the path to the root. A {\em child} of a vertex $v$ is a vertex of which $v$ is the parent.

Suppose $a$ and $b$ are two distinct leaves of $G$. Then 
the set $\{a,b\}$ is called a {\em cherry} of $G$ if the parent $p_a$ of $a$ is also the parent of $b$. If the parent $p_b$ of $b$ is a reticulation vertex
  and there is an arc $(p_a,p_b)$ from $p_a$ to $p_b$  then  the
  set $\{a,b\}$ is called a {\em reticulate cherry}. In this case, the arc $(p_a,p_b)$ is called a {\em reticulation arc}
  of $G$ and the leaf $b$ is called a {\em reticulation leaf} of $G$. 
  
  For example, $x_1$ is the reticulate leaf 
  of the reticulate cherry $\{x_1,x_2\}$
  in the graph depicted in Figure~\ref{fig:phy-network}(i). The parent of $x_2$ is a tree vertex and the parent of $x_1$ is a reticulation vertex. The vertices $u$ and $v$ form a bead.
  
 \subsection{Phylogenetic networks and trees}
 Suppose $G$ is a graph as described above.
 If $G$ contains at least three vertices then we call $G$ a {\em (phylogenetic) network (on $X$)}
 if the outdegree of the root $\rho$ of $G$ is 2, the leaf set of $G$ is $X$, and every vertex other than 
 $\rho$ or a leaf is a tree vertex or a reticulation vertex. 
  Note that  our definition of a phylogenetic network differs from the standard definition of such an object (see e.g. \cite{S16}) by allowing the network to contain beads and $X$ to contain a single element.  To distinguish between our type of phylogenetic networks and the
 standard type of phylogenetic networks we refer to the
 latter as {\em beadless} phylogenetic networks.
We call a phylogenetic network 
 (on $X$) a {\em phylogenetic tree (on $X$)} if it does not contain any reticulation vertices.

 %
 Finally, we call a phylogenetic
 network $N$ on $X$ such that $N$ is either a phylogenetic tree on $X$ or
 every reticulation vertex of $N$ is contained in a bead
 a {\em beaded tree} (see e.g. \cite{IJJMZ19}
 and \cite{HLM21} for more on such graphs).
 
 \subsection{Ploidy profiles}
 Let $X=\{x_1,\ldots, x_n\}$. Then, as mentioned in the introduction,
  a {\em ploidy profile $\vec{m}=(m_1,\ldots, m_n)$ (on $X$)} is an 
 $n$-dimensional vector of positive integers such that each
 component is indexed by an element in $X$. For ease of readability, we will 
 assume from now on that the elements in $X$ are always ordered in such a 
 way that $x_i$ indexes component $m_i$ of $\vec{m}$, for all $ 1\leq i\leq n$,
 and that $\vec{m}$ is in {\em descending order},
 that is,  $m_i\geq m_{i+1}$ holds for all $1\leq i\leq n-1$. For example the vector $\vec{m}=(7,6,6,5)$
 is a ploidy profile on $X=\{x_1,x_2.x_3,x_4\}$ 
 where $x_1$ indexes the first component i.\,e.\,$7$, $x_2$ indexes the second component, and so on.
 
 Suppose $\vec{m}=(m_1,\ldots, m_n)$ is a ploidy profile
 on $X$. If $m_1\geq 2$ and  all other components of $\vec{m}$ are 1 then we call $\vec{m}$
a {\em simple} ploidy profile. If $\vec{m}$ is a  simple
ploidy profile and  $n=1$ then we call $\vec{m}$ a {\em strictly simple} ploidy profile. Finally, we say that
a phylogenetic network is a {\em realization} of $\vec{m}$
if it realizes $\vec{m}$ (as defined in the introduction).
For example,
 the ploidy profile $\vec{m}=(77,1,1,1)$ is simple but not strictly simple and the
 ploidy profile $(77)$ is strictly simple. The phylogenetic networks depicted
 in Figure~\ref{fig:phy-network} are realizations of the ploidy profile $\vec{m} = (14,12,12,10)$. 
 %

\section{Realizing ploidy profiles}
\label{sec:realization}

We start this section by introducing further terminology which
will allow us to construct our network $N(\vec{m})$ from a ploidy profile $\vec{m}$. To avoid undesirable uniqueness issues, we remark that our construction is slightly different from the construction
of the corresponding network for $\vec{m}$ introduced in \cite{HM21} in that
we choose a different network with which we initialize its construction.  As part of our construction we also include a worked example at the end of this section.
 
Suppose $\vec{m}=(m_1, m_2, \ldots, m_n)$ is a ploidy profile on $X=\{x_1, x_2, \ldots, x_n\}$. Then we first construct 
a sequence $\sigma(\vec{m})$ of ploidy profiles from $\vec{m}$
which we call the {\em simplification sequence} for $\vec{m}$.
This sequence starts with the ploidy profile $\vec{m}$ and terminates with a certain simple ploidy profile $\vec{m_t}$ which
we call the {\em terminal element} of $\sigma(\vec{m})$. \orange{If $\vec{m}$ is simple,
	then we define  $\sigma(\vec{m})$ to contain only $\vec{m}$. Thus, $\vec{m}= \vec{m_t}$ in this case}. 

Assume for the following that $\vec{m}$ is not simple. \orange{To define 
$\sigma(\vec{m})$ in this case, assume furthermore}
that all ploidy profiles in $\sigma(\vec{m})$
have been constructed already up to and including a ploidy profile
$\vec{m'}=(m_1',\ldots, m_q')$ on some set $X'=\{x_1',\ldots, x_q'\}$, some $q\geq 1$. \orange{If $\vec{m'}$ is simple then we define $\vec{m'}$ to be 
$\vec{m_t}$. So assume that $\vec{m'}\not=\vec{m_t}$.}  
Put $\alpha = m_1' - m_2'$. Let $X''$ denote the set that
indexes  the next ploidy profile 
in $\sigma(\vec{m})$ which we call $\vec{m''}$.
Then $\vec{m''}$ \orange{and $X''$ are} obtained from $\vec{m'}$ 
\orange{and $X'$} by applying one of the following cases. 
\begin{enumerate}

\item[$\bullet$] If  $\alpha=0$ then delete $m_1'$ from
$\vec{m'}$ and its index $x_1'$ from $X'$. To obtain $X''$, rename the elements $x_{k}'$ as  $x_{k-1}''$, $2\leq k\leq q$.

\item[$\bullet$] If $\alpha>m_2'$ then replace $m_1'$ by $\alpha$. The set $X''$ is $X'$ in this case 
and the indexing of the components of $\vec{m''}$ is as in $\vec{m'}$. 

\item[$\bullet$] If $\alpha\leq m_2'$ then 
\orange{remove $m'_1$ from $\vec{m'}$ and its index $x_1'$ from $X'$ to obtain a ploidy profile $\vec{a}$ on  $X'-\{x'_1\}$. Into $\vec{a}$, insert  $\alpha$ so that the resulting integer vector $\vec{b}$ is a ploidy profile
on $X''=X'-\{x'_1\}\cup \{x\} $ where $x$ is an element not already contained in $X'$. That element is used to index $\alpha$  in $\vec{b}$. As $\vec{a}$ might already contain a component with value $\alpha$, we also require that $\alpha$ is inserted into $\vec{a}$ directly after the last occurrence of $\alpha$  to ensure that $\vec{b}$ is unique. Next, relabel the elements in $X''$ so that the indexing of
$\vec{b}$
conforms to our indexing convention for ploidy profiles. Finally, put $\vec{m''}=\vec{b}$.}
\end{enumerate}
This completes the construction of the simplification sequence for $\vec{m}$. \change{To aid intuition, we present  the simplification sequence for  the ploidy profile $\vec{m}=(7,6,6,5)$ at the end of this section.}

To obtain our realization for our ploidy profile $\vec{m}$,
 we next choose a {\em core network} for $\vec{m}$, that is, a 
 phylogenetic network $N$ that realizes the terminal element
 $\vec{m_t}$. This is always possible since
 for any simple ploidy profile $(m_1,\ldots, m_n)$ on $X=\{x_1,\ldots, x_n\}$
 such a network can be obtained using the following naive approach.  Take a directed path $P$ with $n+2(m_1-1)$ vertices 
 and label the first $m_1-1$ vertices on $P$ by $v_i$, $1\leq i\leq m_1-1$ and the
 next $n-1$ vertices by $w_i$, $2\leq i\leq n$. Starting at the other end of $P$, label the first vertex $x_1$ and
 the remaining $m_1-1$ vertices by $v_i'$, $1\leq i\leq m_1-1$.
 Finally, for all $2\leq i\leq n$ attach the arc $(w_i,x_i)$
 and, for all $1\leq i\leq m_1-1$, the arc $(v_i,v_i')$. By construction, the resulting
 graph is a phylogenetic network \orange{(without horizontal arcs)} that realizes $\vec{m_t}$. To
 keep the description of the construction of 
 $N(\vec{m})$ from $N(\vec{m_t})$ compact, we refer the
 interested reader to \orange{Section~\ref{sec:weak-hgt-consistent}} 
 for the construction of a more attractive  
 choice of core network for $\vec{m}$.
 
 Starting with a core network $N$ for $\vec{m}$ we then apply a traceback through $\sigma(\vec{m})$ to obtain $N(\vec{m})$
 via the addition of  vertices and arcs (see e.\,g.\,\cite{JM21},
 where, in a different context, this process was called ``adding'' vertices). For
 this we distinguish the same cases as before. More precisely, if $\vec{m}$ is simple and therefore
 the terminal element of $\sigma(\vec{m})$, we define $N(\vec{m})$
 to be $N$. So assume that $\vec{m}$ is not simple and that, \orange{starting at
 	$\vec{m_t}$, for}
 all ploidy profiles in $\sigma(\vec{m})$ up to and including
 a ploidy profile $\vec{m''}$ on $X''$ a realization of them
 has already been constructed. Let $N''$ denote the 
 realization obtained for $\vec{m''}$. As before, let $\vec{m'}$ on
 $X'$ denote the ploidy profile in $\sigma(\vec{m})$ that precedes
 $\vec{m''}$. For clarity of presentation of the main ideas, we
 remark that in each of the following cases the set $X'$ is obtained
 from $X''$ by reversing the indexing that formed part of the
 corresponding case in the construction of $\sigma(\vec{m})$.
 
	\begin{enumerate}
		\item[$\bullet$] If  $\alpha=0$ then replace $x_1''$ by the cherry $\{x_1',x_2'\}$.
		\item[$\bullet$] If $\alpha > m_2$, then subdivide the incoming arc of
	$x_1''$ by a new vertex $u$. Next, subdivide the incoming arc of $x_2''$ by a
	vertex $v$ and add the new arc $(v,u)$.
	\item[$\bullet$] If $\alpha \leq m_2$ \orange{then let $\vec{a}$ and $\vec{b}$ as in the corresponding case of the construction of $\sigma(\vec{m})$. Let $j$ denote the index of the 
	component of $\vec{b}$ that was inserted into  $\vec{a}$ as part of the construction of  $\vec{b}$}.
	 Then subdivide the incoming arc of $x_{j}''$ by a new vertex $v$ and replace $x_1''$ by the cherry $\{x_1',x_2'\}$. Next, subdivide the incoming arc of $x_1'$ by a new vertex $u$. Finally, add the arc $(v,u)$ and delete $x_j''$ and its incoming arc $(u,x_j'')$ (suppressing the resulting indegree and outdegree one vertex).
	
\end{enumerate}

To illustrate the construction of $N(\vec{m})$, consider the ploidy profile $\vec{m} = (7,6,6,5)$ on $X=
\{x_1,x_2,x_3,x_4\}$. Then the sequence $\vec{m}$, $(6,6,5,1)$, $(6,5,1)$ $(5,1,1)$ is the
simplification sequence $\sigma(\vec{m})$ for $\vec{m}$. Since $\vec{m_t} = (5,1,1)$, the
phylogenetic network  depicted on the left of Figure~\ref{fig:simpli-traceback-cherry-picking-only} is
a core network for $\vec{m}$. In fact it is 
the core network $\mathcal B(\vec{m})$ for $\vec{m}$
whose construction is described in the proof of
Theorem~\ref{theo:arc-rich}. The network $N_0$
on the right of Figure~\ref{fig:simpli-traceback-cherry-picking-only} is the network $N(\vec{m})$
when initializing its construction with $\mathcal B(\vec{m})$. It
is obtained via the traceback
of $\sigma(\vec{m})$ by applying the cases indicated below the 
arrows. To be able to 
reuse the example to help illustrate Theorem~\ref{theo:arc-rich}, we represent one of the incoming arcs of each of the reticulation vertices of the networks that make up the figure as a thin, horizontal arc. Note that the leaf labels between the networks do not necessarily translate between the networks due to the applied renaming scheme for the elements of the indexing sets.


\begin{figure}[h]
	\centering
	\includegraphics[scale=0.5]{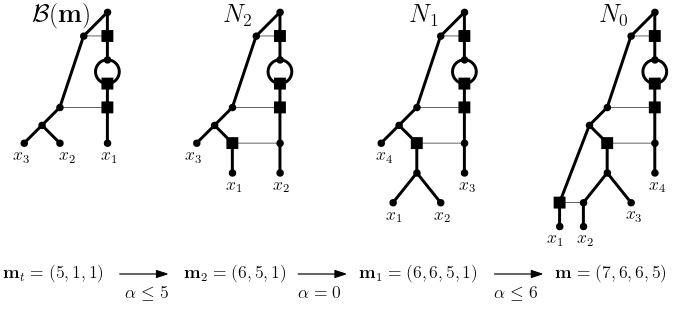}
	\caption{When reading from left to right, the construction of $N(\vec{m})$ obtained from the traceback through the simplification sequence $\sigma(\vec{m})$ for the ploidy profile $\vec{m}=(7,6,6,5)$ on $\{x_1,\ldots, x_4\}$. The ploidy profiles that make up
		$\sigma(\vec{m})$ are given at the bottom. The terminal element $\vec{m_t}$ of $\sigma(\vec{m})$ is the ploidy profile $(5,1,1)$ and the phylogenetic network $\mathcal B(\vec{m})$ on the left is a core network for $\vec{m}$.
		The cases that apply in each step of the traceback are indicated below the arrows between the four ploidy profiles that make up $\sigma(\vec{m})$. The thin horizontal arcs relate to the example illustrating Theorem~\ref{theo:arc-rich}.  
		\label{fig:simpli-traceback-cherry-picking-only}
	}
\end{figure}

We conclude this section by remarking that, by  \cite[Theorem 2]{HM21}, $N(\vec{m})$ employs the minimum number
of reticulation vertices to realize $\vec{m}$ provided (i) $\mathcal B(\vec{m})$ uses the minimum number
of reticulation vertices to realize the terminal element of the simplification sequence of $\vec{m}$, and (ii)
the case $\alpha>m_2'$ is never executed when constructing $N(\vec{m})$ from $\mathcal B(\vec{m})$ where 
$\alpha$ and $m_2'$ are as in the description of that case  
(see \orange{Figures~6 and 10 in \cite{HM21} for more on this} and the next section for an example).

\section{Comparing realizations  of one and the same  ploidy profile}
\label{sec:comparison}
As indicated in  the previous section,
a ploidy profile $\vec{m}$ might have more than one core network.
This 
immediately begs the question of how different realizations of a ploidy profile might be. For phylogenetic trees and, more recently, for general rooted \orange{(beadless)} or unrooted phylogenetic networks this type of question has generally been addressed in the form of understanding their space. From a formal point of view, this space  which is called {\em tree space} in the case of phylogentic trees and {\em network space} in the case of \orange{rooted (beadless) or unrooted} phylogenetic networks is a graph.
Calling that graph $G$ then the vertices of $G$ are the phylogenetic trees or networks of interest and any two vertices of $G$ are joined by an edge if one can be transformed into the other using some graph-editing operation  such as the {\em Subtree Prune and Regraft operation (SPR)} for phylogenetic trees \cite{SS03} or one the operations described in \cite{EFM21,HMW16,J21,IJJM21}. 

None of the operations described in those papers however preserve, in general, our central requirement that a network is a realization of a ploidy profile. For technical reasons
	which will allow us to extend the idea of tree/network space to a space of ploidy profiles, we first need to extend
the notion of a phylogenetic network. To this end, we call
a phylogenetic network where different leaves are allowed to share the same label a {\em  multiple-labelled network}. 
In the form of, for example, multiple-labelled trees
such structures have already been used successfully in a polyploidization context \cite{OP11,R21}. For their usage in a more mathematical context see e.\,g.\,\cite{HS20} and the references therein. 
For example, \orange{consider the phylogenetic network depicted in Figure~\ref{fig:core-net-77-1-1-1}(ii). Then the graph obtained as follows is a multiple-labelled network. First, remove the
arc $(w,s_4')$ and one of the incoming arcs of $h_6$. Next, suppress
$h_6$ and its parent and add two further vertices  both of which we call $x_1$. Finally, 
add  an arc $(w,x_1)$  that ends in one of the two new vertices $x_1$  and an 
arc $ (s_4',x_1)$ that ends in the other} so that a cherry
on the multiset $\{x_1,x_1\}$  is generated. Since the number of directed
 paths from the root of the resulting graph to each of its leaves
 is not affected by this process, we extend the definitions of a reticulation vertex and  when a ploidy profile is realized by a
phylogenetic network to multiple-labelled networks in
the obvious way.

Armed with this, we are now ready to define ploidy profile space. Suppose $\vec{m}$ is a ploidy profile
\orange{on $X$}. Then we refer to the following graph as {\em ploidy profile space $\mathcal P(\vec{m})$} for $\vec{m}$. The vertex set of the graph is the set of all multiple-labelled networks that realize $\vec{m}$.
To be able to define the edge set of our graph, we require a further concept. 
We say that a multiple-labelled network $N'$ is obtained from
a multiple-labelled network $N$ via a
{\em split operation} if there exists a reticulation vertex $h$ of $N$ with parents $p_1$ and $p_2$ and child $c$ such that \orange{$(h,c)$ is a cut-arc} and $N'$ is obtained from $N$ as follows. First delete $h$ and its three incident arcs from $N$ and then make a copy of the subgraph \orange{of $N$ induced by the vertices of $N$ below $c$.}  Finally, add the arc $(p_2,c)$ as the  incoming arc to one of the two copies of $c$ and $(p_1,c)$ as the incoming arc of the other. 
We illustrate this operation in Figure~\ref{fig:split-operation}. 
\begin{figure}[h!]
	\centering
	\includegraphics[scale= 0.4]{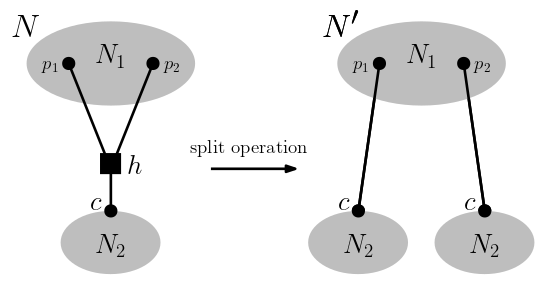}
	\caption{An illustration of the split operation applied to the reticulation vertex $h$ with parents $p_1$
		and $p_2$ and child $c$. $N_1$ and $N_2$ indicate
		parts of the multiple-labelled networks $N$ and $N'$ that are of no relevance to the discussion.
	\label{fig:split-operation}
	}
\end{figure}
Informally speaking, the split operation may be thought of as ``un-zipping'' a multiple-labelled network (see also \cite{PS15} for a related notion of ``unzipping'' a phylogenetic network). Choose  \orange{an edit distance
for comparing multiple-labelled trees that realize $\vec{m}$ such that the following graph is connected.
The  vertex set is the set of all multiple-labelled trees that realize $\vec{m}$ and any two multiple-labelled trees in that set are joined by an edge if their distance under the chosen edit distance is 1. For our next result (Proposition~\ref{prop:connected}), we are interested in edit distances for which this space is connected 
(see  \cite{HSSM11} for examples of such distances and also \cite{LE-MHM19} for some recent computational complexity results concerning them).}

Armed with \orange{the split} operation and choice of \orange{edit distance}, we continue our definition of ploidy profile space
for $\vec{m}$ as follows. We say that two distinct realizations 
	$N$ and $N'$ of $\vec{m}$ are joined by an edge if either $N'$ can be obtained from $N$ via a single split operation or $N$ and $N'$ are both
	multiple-labelled trees and 
	their distance under the chosen \orange{edit distance} is 1. 

\begin{proposition}\label{prop:connected}
	For any ploidy profile $\vec{m}$ on $X$ and any \orange{edit distance on the set of multiple-labelled trees realizing $\vec{m}$ such that the associated space of multiple-labelled trees is connected},
	ploidy profile space $\mathcal P(\vec{m})$ is connected.  
	\end{proposition}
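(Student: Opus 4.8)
The plan is to show that every vertex of $\mathcal{P}(\vec{m})$, that is, every multiple-labelled network realizing $\vec{m}$, is connected within $\mathcal{P}(\vec{m})$ to some multiple-labelled tree realizing $\vec{m}$, and then to invoke the hypothesis that the multiple-labelled trees realizing $\vec{m}$ already form a connected subgraph under the chosen edit distance. Connectivity of $\mathcal{P}(\vec{m})$ then follows at once: given any two realizations $N$ and $N'$, route each to a tree $T_N$ and $T_{N'}$, and join $T_N$ to $T_{N'}$ by a path of tree-edit edges.

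First I would establish that the split operation is always applicable to a realization that still contains a reticulation vertex. The key observation is that a \emph{lowest} reticulation vertex $h$, that is, one with no reticulation strictly below it, has an outgoing arc that is a cut-arc. Indeed, the child $c$ of $h$ is strictly below $h$ and hence not a reticulation, so it has indegree one with $h$ its unique parent; moreover every vertex strictly below $h$ is a tree vertex or a leaf with a unique parent that again lies below $c$, so the vertices below $c$ induce a tree whose only incoming arc from the remainder of $N$ is $(h,c)$. Removing $(h,c)$ therefore disconnects this block, so $(h,c)$ is a cut-arc and the split operation at $h$ is defined.

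Next I would verify the two invariants that turn repeated splitting into a descent in $\mathcal{P}(\vec{m})$. (a) A split strictly decreases the number of reticulation vertices: deleting $h$ removes one reticulation, the duplicated subgraph below $c$ is a tree and so contains none, and the parents $p_1,p_2$ (possibly equal, in the case of a bead) retain their in- and out-degrees since each merely trades its arc to $h$ for an arc to one copy of $c$, creating no new reticulation. (b) A split preserves realization of $\vec{m}$: because $(h,c)$ is a cut-arc, every directed root-to-leaf path through a leaf $\ell$ below $c$ factors through $(h,c)$, so the number of such paths equals $(a_1+a_2)\,b_\ell$, where $a_i$ is the number of directed root-to-$p_i$ paths and $b_\ell$ the number of directed $c$-to-$\ell$ paths; after the split the two copies of $\ell$ carry $a_1 b_\ell$ and $a_2 b_\ell$ paths, whose sum is unchanged, while the path counts of all leaves not below $c$ are untouched. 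Hence the total number of root-to-leaf paths per label is preserved, so the result is again a realization of $\vec{m}$, a genuine vertex of $\mathcal{P}(\vec{m})$, and adjacent to its predecessor by a split edge.

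Combining (a) and (b), starting from an arbitrary realization $N$ and repeatedly splitting a lowest reticulation produces, after a number of steps bounded by the number of reticulations of $N$, a multiple-labelled tree $T_N$ realizing $\vec{m}$, with each step an edge of $\mathcal{P}(\vec{m})$; thus $N$ and $T_N$ lie in one component. Since by hypothesis the chosen edit distance renders the space of multiple-labelled trees realizing $\vec{m}$ connected, and these tree edges are by definition also edges of $\mathcal{P}(\vec{m})$, all such trees lie in a single component, which finishes the argument. The part I expect to require the most care, and hence the main obstacle, is the first step: guaranteeing that a split is always available through the lowest-reticulation cut-arc argument while simultaneously checking that it preserves realization of $\vec{m}$ and strictly reduces the reticulation number, since only then is termination at a multiple-labelled tree assured.
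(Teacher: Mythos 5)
Your proof is correct and follows essentially the same route as the paper's: repeatedly apply split operations to reduce an arbitrary realization of $\vec{m}$ to a multiple-labelled tree, then invoke the assumed connectivity of the space of multiple-labelled trees realizing $\vec{m}$ under the chosen edit distance. The paper simply asserts the reduction to a tree (via an intermediate network with no reticulation vertex above another), whereas you supply the supporting details --- applicability of the split at a lowest reticulation via the cut-arc condition, preservation of the ploidy profile, and termination --- which is a welcome elaboration rather than a different argument.
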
	

	\begin{proof} 
		\orange{Choose an edit distance $D$ such that the associated space of multiple-labelled trees  realizing $\vec{m}$ is connected.}
		Clearly, any realization $N$ of $\vec{m}$ can be transformed
		into a realization $N^+$ of $\vec{m}$ that does not contain	reticulation vertices that are above each other using a sequence of split operations. Since the vertex set of ploidy profile space is the set of multiple-labelled phylogenetic 
		networks \orange{that realize $\vec{m}$,}  it follows that $N^+$ can be transformed into
		a multiple-labelled tree \orange{that realizes $\vec{m}$} by using a further sequence of
		split operations. \orange{By assumption on $D$,}
		any multiple-labelled tree \orange{that realizes $\vec{m}$} can be transformed into another multiple-labelled tree \orange{which also realizes $\vec{m}$} via a sequence $\chi$ of multiple-labelled trees \orange{that all realize $\vec{m}$,} such that any two
		consecutive multiple-labelled trees in $\chi$ have distance \orange{1 under $D$. Hence, }
		$\mathcal P(\vec{m})$ is connected.
		\hfill{$\Box$}

		\end{proof}
		
As an immediate consequence of Proposition~\ref{prop:connected}, we obtain a distance measure for realizations of a ploidy profile \orange{$\vec{m}$}. More precisely, choose an \orange{edit distance} $D$ for comparing two multiple-labelled trees \orange{that realize $\vec{m}$ such that the space associated to $D$ is connected}. Then we define the distance $D_{\mathcal P(\vec{m})}(N,N')$ of any two realizations $N$ and $N'$ of $\vec{m}$ to be the length of a shortest path in $\mathcal P(\vec{m})$ that joins $N$ and $N'$. We refer the interested reader to Section~\ref{sec:example} for 
 an example where we compute an upper bound on this distance for a real biological dataset.

\section{A core network with horizontal arcs}
\label{sec:weak-hgt-consistent}
Although undoubtedly useful, phylogenetic networks on their own do not provide information as to whether or not two species suspected of hybridization have existed at the same point in time.
To add this type of realism to phylogenetic networks, so called 
time stamp maps can be used. Subject to some constraints
such maps assign a non-negative real number to every vertex of a phylogenetic network (see e.\,g.\,\cite{BS06,FS15,IJJMZ20}). 
As is well-known, not
every phylogenetic network admits a time stamp map. However those that do enjoy the attractive property that arcs whose both end vertices have been assigned the same time stamp can be drawn horizontally to indicate that the ancestral species represented by their end vertices have existed at the same time. 

To be able to extend the notion of a time stamp map to ploidy profiles, we start with the definition of a certain time stamp map for (beadless) 
phylogenetic networks that originally appeared in \cite{IJJMZ20}.
Suppose that  $N$ is a beadless phylogenetic network on a set $X$ with at least two elements. Then
 a map $t:V(N)\to\mathbb R_{\geq 0}$ 
 is called a {\em HGT-consistent labelling} of $N$  
 if the following properties hold:
\begin{enumerate}
	\item[(P1)] For all arcs $(u,v)$ of $N$, we have that $t(u)\leq t(v)$
	if $v$ is a reticulation vertex and, otherwise, that $t(u)<t(v)$.
	\item[(P2)] For each vertex $u$ that is not a leaf of $N$ there exists
	a child $v$ such that $t(u)<t(v)$.
	\item[(P3)] For each reticulation vertex $v$ 
	of $N$ there exists precisely one
	parent  $u$
	such that $t(v)=t(u)$.
\end{enumerate}
Informally speaking, Property~(P1) means that time is moving forward, from 
the root of the network to its leaves. Property~(P2) implies that every
ancestral species $v$ has given rise to at least one species that did not exist at the same time as 
$v$. Finally,
Property~(P3) implies for every reticulation vertex $v$ that the ancestral species represented by $u$
must have existed at the same time as the species represented 
by $v$.  Examples of (beadless) phylogenetic networks that admit a HGT--consistent labelling include 
{\em stackfree} phylogenetic networks, that is, (beadless) phylogenetic networks that have no arcs for which both end vertices are reticulation vertices (see \cite{BESS21} for more on such networks). 
It should however be noted that there exist
(beadless) phylogenetic networks that admit such a labelling which might not be stackfree.

Since the definition of a HGT-consistent labelling 
of a beadless phylogenetic network $N$ does not rely on the assumption that $N$ has no
 beads, we extend it to our type of phylogenetic network by dropping the ``beadless'' requirement
 and qualifying Property~(P3) by excluding reticulation vertices that are contained in beads.
 In combination, Property~(P2) and the thus adjusted Property~(P3) imply that there cannot have existed
an ancestral species $v$ such that $v$ is involved
in a polyploidization event and, at a later point in time, 
one of its parents,  $p$ say, hybridizes with the unique child $u$ of $v$. Put differently, we cannot simultaneously have all three arcs $(v,u)$, $(p,v)$, and $(p,u)$ and  $v$ is a reticulation vertex. 
%
%

In view of the aforementioned combined effect of Properties~(P2) and (P3), we also say that a phylogenetic network $N$ admits a {\em weak HGT-consistent labelling} $t$ 
if $t$ is a map from the vertex set of $N$ to the set of non-negative real numbers 
such that Properties~(P1) and (P2) hold and Property~(P3)
is weakened to 
\begin{enumerate}
\item[(P3')] there exists at most one reticulation vertex $v$
with distinct parents $u_1$ and $u_2$ 
with $u_2$ below $u_1$ 
such that $v$ does not satisfy Property~(P3) i.\,e.\, $t(u_i)\not=t(v)$, for all
$i=1,2$.
\end{enumerate}
%
As a first observation, note that a HGT-consistent labelling of a phylogenetic network is also a weak HGT-consistent labelling for that network.

To be able to state Theorem~\ref{theo:arc-rich}
which is concerned with clarifying the structure of ploidy profiles that admit a HGT-consistent
labelling or a weak HGT-consistent labelling, we require the concept of a binary representation
of a positive integer $m$. This representation essentially records for the representation of $m$ as a sum $\sum_{i=0}^l 2^q$ of ``powers of two'' the vector of exponents.
More formally, we define the {\em binary representation} of 
a positive integer $m=\sum_{j=1}^k 2^{i_j}$ to be the vector $(i_1,\ldots, i_k)$, $k\geq 1$, with $i_{j-1}>i_j$, for
all $2\leq j\leq k$, and $i_k\geq 0$. Note that although related,
the binary representation of $m$ is not the bit-wise representation of $m$. For example, for $m=77=2^6+2^3+2^2+2^0$ the binary representation is $(6,3,2,0)$ whereas the
bit-wise representation is $(1,0,0,1,1,0,1)$. 

We say that a strictly simple ploidy 
profile $\vec{m}=(m_1)$ is {\em arc-rich} if the dimension of the binary representation of $m_1$ is at least two. Furthermore, we call a ploidy profile $\vec{m}$ {\em practical} 
if either $\vec{m}$ is simple but not strictly simple or $\vec{m}=(m_1)$ and $m_1$ is of the form 
$m_1=2^l$, some $l\geq 1$.  For example, the
ploidy profile $\vec{m}=(77)$ is arc-rich but not practical
since the binary representation of $77$ is the vector $(6,3,2,0)$.   

\begin{theorem} \label{theo:arc-rich}
	Suppose $\vec{m}$ is a ploidy profile. If the terminal element $\vec{m_t}$ of the simplification sequence for $\vec{m}$ is practical then
	there exists a core network for $\vec{m}$ that admits a HGT-consistent labelling. Otherwise, $\vec{m_t}$ is arc-rich and there exists a core network for $\vec{m}$ that admits a weak HGT-consistent labelling.
\end{theorem}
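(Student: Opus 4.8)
The plan is to construct the core network $\mathcal{B}(\vec{m})$ explicitly for the simple terminal element $\vec{m_t}$ and then to exhibit an appropriate labelling $t$ on its vertices, verifying the three properties (P1), (P2), and (P3) (or (P3$'$) in the weak case) by direct inspection. Since the terminal element $\vec{m_t}=(m_1,1,\ldots,1)$ is always simple, the construction decomposes naturally according to the binary representation $(i_1,\ldots,i_k)$ of $m_1$, so my first step would be to translate the numerical dichotomy ``practical versus arc-rich'' into a structural dichotomy on $\mathcal{B}(\vec{m})$. When $\vec{m_t}$ is practical, either it is simple-but-not-strictly-simple (so there are genuine non-root leaves indexed by $x_2,\ldots,x_n$) or $m_1=2^l$ is a single power of two; in both subcases the $2^l$ paths to the $x_1$-leaf can be realized by a clean stack of $l$ beads, each bead doubling the path count. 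The key point is that a bead consists of two parallel arcs whose reticulation vertex is excluded from (P3) by our adjusted convention, so a tower of beads imposes no (P3)-type obligation and can be consistently time-stamped.

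\emph{First I would} build $\mathcal{B}(\vec{m_t})$ and assign $t$ layer by layer down from the root, choosing strictly increasing real values along each tree arc and equal values across the arcs that (P3) permits to be horizontal. The heart of the verification is (P3): each genuine reticulation vertex (arising from a summand $2^{i_j}$ with $j\geq 2$ in the binary expansion, i.e. from ``gluing'' a smaller power-of-two subnetwork onto the main stack) must have exactly one parent sharing its time stamp. I would exploit the nested power-of-two structure so that each such reticulation vertex receives its two incoming arcs from vertices at \emph{different} depths in the stack, letting me declare the deeper parent to be the time-equal one and the shallower parent to be strictly earlier, thereby satisfying (P1) and (P3) simultaneously. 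For (P2), every internal vertex lies on the main descending path and so has a child strictly later in time, which holds automatically once $t$ is strictly increasing along at least one outgoing tree arc at each vertex.

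\emph{The dichotomy} then resolves cleanly: if $m_1$ is a power of two (or $\vec{m_t}$ has pendant unit-leaves absorbing the slack), \emph{all} reticulation vertices can be arranged to admit a time-equal parent, yielding a genuine HGT-consistent labelling. If instead $\vec{m_t}=(m_1)$ is strictly simple with $m_1$ \emph{not} a power of two — precisely the arc-rich case, where the binary representation has dimension $k\geq 2$ — the very first ``glue'' that attaches the sub-tower for the lowest-order summand $2^{i_k}$ onto the tower for the top summand $2^{i_1}$ forces one reticulation vertex whose two parents sit at incomparable-in-time positions that cannot be reconciled with (P3) without violating (P1) elsewhere. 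I would argue that exactly one such offending vertex arises (the one joining the two highest-order blocks), and that by choosing the stacking order so its two parents satisfy the ``$u_2$ below $u_1$'' comparability of (P3$'$), all other reticulation vertices retain a time-equal parent, giving a weak HGT-consistent labelling.

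\emph{The hard part will be} pinning down the arc-rich case to \emph{exactly one} violating reticulation vertex rather than several, and checking that this lone violation is compatible with (P3$'$)'s requirement that its two parents $u_1,u_2$ be comparable (one strictly below the other). This demands a careful choice of how the binary blocks $2^{i_1},\ldots,2^{i_k}$ are assembled: I expect to chain them so that the combination is associative-from-the-top, ensuring the discrepancy concentrates at a single junction on the main stack while every later junction attaches a self-contained power-of-two sub-block whose reticulation vertices inherit a consistent horizontal parent. Demonstrating that this single-junction behaviour is forced (and not merely achievable) for arc-rich profiles — i.e. that no labelling of $\mathcal{B}(\vec{m})$ avoids all violations when $m_1$ is not a power of two — is where the real content lies, and I would handle it by a parity or path-counting argument showing that a full HGT-consistent labelling would force the number of root-to-$x_1$ paths to be a power of two.
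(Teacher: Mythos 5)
Your overall strategy (build $\mathcal B(\vec{m_t})$ from the binary representation of $m_1$ and exhibit the time stamps directly, treating bead reticulations as exempt from (P3)) is the same as the paper's, but two things go wrong. First, your treatment of the practical case is incorrect where it matters most. You assert that when $\vec{m_t}$ is simple but not strictly simple ``the $2^l$ paths to the $x_1$-leaf can be realized by a clean stack of $l$ beads''; this is false, since in that subcase $m_1$ need not be a power of two (e.g.\ $\vec{m_t}=(77,1,1,1)$ is practical), so the core network still needs the cross arcs $a_2,\ldots,a_k$ joining the chain hanging below the root to the bead stack, and the lowest of the resulting reticulation vertices, $s_k'$, is precisely the vertex that threatens (P3). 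The reason practicality saves the day is not that the unit leaves ``absorb the slack'': it is that the caterpillar on $\{x_2,\ldots,x_n\}$ is attached at a vertex $w$ subdividing the arc coming into $s_k'$, and $w$ can be given the same time stamp as $s_k'$ --- this hands $s_k'$ its required time-equal parent for (P3), while (P2) at $w$ is rescued because $w$'s other child (the root of the caterpillar) is stamped strictly later. Your proposal never identifies this mechanism, and without it the dichotomy practical versus arc-rich does not line up with full versus weak labellings.

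Second, you locate ``the real content'' in showing that for arc-rich $\vec{m_t}$ \emph{no} labelling of $\mathcal B(\vec{m})$ avoids a violation, to be proved by a parity argument that a full HGT-consistent labelling forces a power-of-two path count. The theorem does not ask for this: both clauses are purely existential (and a HGT-consistent labelling is in particular a weak one), so in the arc-rich case one only has to exhibit a labelling with at most one offending reticulation vertex. The paper does exactly that: the labelling built for the practical case is reused, $s_k'$ is the unique vertex failing (P3), and its two parents are comparable (the bead reticulation vertex immediately above $s_k'$ on the main stack lies below $s_{k-1}$ via the arc $a_{k-1}$ and the stack), so (P3') holds. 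The forcing statement you propose is a stronger claim that the theorem neither states nor needs, and the suggested power-of-two path-counting argument would not establish it in the generality you invoke.
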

\begin{proof}
For ease of readability, we split 
 the proof into three sections, as indicated below. We start with  introducing a further concept. Suppose $T$ is a
 phylogenetic tree on $X=\{x_1,\ldots, x_n\}$, some $n\geq 2$.
 Then we call $T$ a {\em caterpillar tree (on $X$)} if the elements of $X$ can be relabelled in such a way that $T$ has a single cherry and that cherry is $\{x_{n-1},x_n\}$. If $n\geq 3$ then $x_1$ is a leaf
 that is a child of the root $\rho$ of $T$, and every vertex
 on the path from $\rho$ to the shared parent \orange{$f$} of $x_n$ and $x_{n-1}$ other than $\rho$ and \orange{$f$} has a child that is a leaf.
 For ease of presentation, we assume that the other child of the parent \orange{$f'$} of \orange{$f$} is $x_{n-2}$, the other child of the parent 
  of \orange{$f'$} is $x_{n-3}$
 and so on. 
 
 For the remainder of the proof, assume that $\vec{m}$ is a 
 simple ploidy profile  (see Figure~\ref{fig:core-net-77-1-1-1}   for an illustration of our
 constructions for the ploidy profile $\vec{m}=(77,1,1,1)$ on $X=\{x_1,\ldots, x_4\}$). 
 Since a core network for $\vec{m}$ realizes the terminal element
 $\vec{m_t}=(m_1,\ldots, m_n)$ of the simplification sequence for $\vec{m}$ 
 and $\vec{m_t}$ is simple, we need to consider the cases that
 $\vec{m_t}$ is strictly simple and that $\vec{m_t}$ is not strictly simple. Let $X=\{x_1,x_2,\ldots, x_n\}$ denote the set that indexes $\vec{m_t}$. \orange{Recall that, by convention,} $x_i$ indexes
 $m_i$, for all $1\leq i\leq n$.  \\

 
  \noindent {\em Construction of the core network $\mathcal B(\vec{m})$:}
Assume first that $\vec{m_t}$ is a 
 strictly simple ploidy profile. Then $\vec{m_t}=(m_1)$  and $X=\{x_1\}$.
 Let $\vec{i}=(i_1,\ldots, i_k)$, some
 $k\geq 1$, denote the binary representation of $m_1$. 
 Note that  $i_1\geq 1$ because $m_1\geq 2$. Then we first
 construct a beaded tree $B(i_1)$ that realizes the 
 strictly simple ploidy profile $(i_1)$ by taking $i_1$ beads
 $B_1, B_2,\ldots B_{i_1}$ and, provided $i_1\geq 2$, adding for all $1\leq i\leq i_1-1$
 an arc from the reticulation vertex $h_i$ of $B_i$ to the tree vertex
 of $B_{i+1}$. To the resulting  graph we then add the 
 vertex $x_1$ and an arc $(h_{i_1},x_1)$ to obtain $B(i_1)$.
 If $m_1=2^l$, some positive integer $l$, then we define $\mathcal B(\vec{m})$ to be $B(i_1)$.

So assume that there exists no positive integer $l$ such that
$m_1=2^l$. Then $k\geq 2$. Let $B(i_1,i_k)$ denote the phylogenetic network obtained from $B(i_1)$ by subdividing one of the two outgoing arcs of the root $\rho$ of $B(i_1)$
 by a subdivision vertex $s_k$, the outgoing arc of the 
 reticulation vertex in $B(i_1)$ that has precisely $i_k$ reticulation vertices strictly below it 
 by a vertex $s_k'$ and adding the arc $a_k=(s_k,s_k')$. 
 If $k=2$, then $\mathcal B(\vec{m})$ is $B(i_1,i_2)$. 
 
 Finally, assume that $k\geq 3$. 
Then we first construct the graph $B(i_1,i_k)$. Next, 
 we subdivide the arc $a_k$ by $k-2$ vertices $s_2,\ldots, s_{k-1}$ such that $(s_k,s_2)$ is an arc and $s_j$ is the parent
 of $s_{j+1}$ for all $2\leq j \leq k-2$. For all
 $2\leq j\leq k-1$, we next
 subdivide the outgoing arc of the reticulation vertex of $B(i_1,i_k)$
 that has precisely $i_j$ reticulation vertices of $B(i_1)$ strictly below 
 it by a vertex $s_j'$. Finally, we add for all such $j$
 the arc $a_j=(s_j, s_j')$ and  denote the resulting graph by 
 $\mathcal B(\vec{m})$ in this case.
 By construction, $\mathcal B(\vec{m})$ is a phylogenetic network on $x_1$ 
 that realizes $\vec{m_t}$ in either of these cases for $k$.
 
So assume that $\vec{m_t}$ is not strictly simple. Then 
$m_j=1$, for all $2\leq j\leq n$. Using the same notation
as before, we first construct the network $\mathcal B(\vec{m'})$ for the ploidy profile $\vec{m'}=(m_1)$. If $k=1$
then there exists some positive integer $l$ such that $m_1=2^l$. 
Hence, $\mathcal B(\vec{m})$ is $B(i_1)$ and  
we subdivide one of the outgoing arcs of the root of
$B(i_1)$ by a vertex $w$. So assume that $k\geq 2$. If $k=2$  then 
$\mathcal B(\vec{m'})$ is $B(i_1,i_2)$ and  we
subdivide the arc $a_2$ of $\mathcal B(\vec{m'})$ by a vertex
$w$. So assume $k\geq 3$. Then  we
subdivide the arc $a=(s_{k-1},s_k')$ of $\mathcal B(\vec{m'})$
by a vertex $w$.  In either of these cases for $k$ 
we then attach the caterpillar tree $T$ on
$\{x_2,\ldots, x_n\}$ to $\mathcal B(\vec{m'})$ via an arc from $w$ to the 
root of $T$ in case $n\geq 3$. If $n=2$ then we attach the vertex $x_n$ via the pendant arc $(w,x_n)$.
 By construction, the resulting graph is a
 phylogenetic network that realizes $\vec{m_t}$,
 and it is the network $\mathcal B(\vec{m})$ in this final case
 for $\vec{m_t}$.\\

 \noindent {\em Construction of a HGT-consistent labelling for $\mathcal B(\vec{m})$ in case $\vec{m_t}$ is practical:}
Assume first that $\vec{m_t}$ is strictly simple. Then $m_1=2^{i_1}$
and $\mathcal B(\vec{m})$ is $B(i_1)$. Hence,
there exists a directed path $P:v_0=\rho, v_1, v_2,\ldots, v_q=x_1$ 
from the root $\rho$ of $\mathcal B(\vec{m})$ to $x_1$ once one arc
has been removed from each bead of $\mathcal B(\vec{m})$. Note that $P$ contains
vertices with indegree and outdegree one and that $V(P)$
is the vertex set of $\mathcal B(\vec{m})$. 
Consider the map $t:V(P)\to \mathbb R_{\geq 0}$
defined by putting $t(v_0)=0$ and $t(v_{j+1})= t(v_j)+1$,
for all other $0\leq j\leq q-1$.  By construction,
it follows that  $t$ is a HGT-consistent labelling for $\mathcal B(\vec{m})$ in this case.

So assume that $\vec{m_t}$ is not strictly simple.
Then $\vec{m_t}$ must be simple because it is the 
terminal element of the simplification sequence for $\vec{m}$.
Let $P:v_0=\rho, v_1, v_2,\ldots, v_q=x_1$ denote the directed path in $\mathcal B(\vec{m})$ from $\rho$ to $x_1$ obtained by removing (i) 
the caterpillar tree on $\{x_2,\ldots, x_n\}$ and the incoming arc of its root in case $n\geq 3$ and $x_n$ and its pendant arc if $n=2$, (ii) for all $2\leq j\leq k$, the vertices $s_j$ and their incident arcs, and  (iii) one of the two arcs in every bead. Let $t_P:V(P)\to \mathbb R_{\geq 0}$ be defined as the map 
$t$ in the previous case. 

Consider the map $t:V(\mathcal B(\vec{m}))\to \mathbb R_{\geq 0}$ defined by putting
$t(v)=t_P(v)$ for all vertices $v$ of $\mathcal B(\vec{m})$ that are also vertices
on $P$.  So let $v$ be a
vertex in $\mathcal B(\vec{m})$ that does not lie on $P$. If $v=s_k$
then put $t(v)=t(h_1)$ and if $v=w$ then
put $t(v)=t(s'_k)$. For all $2\leq j\leq k-1$, put
$t(s_j)= t(s'_j)$. Note that this does not violate Properties~(P1)-(P3) since, for all $2\leq j \leq k-1$, we have 
$t(s_j')<t(s_{j+1}')$ and $t(s_k)<t(h_1)<t(s_2')=t(s_2)$. 
If $n\geq 3$ then, for all 
	$1\leq j\leq n-1$, let $w_j$ denote that parent of
	the leaf $x_{j+1}$ of  $T$. Put $t(w_1)=t(w)+1$
	and, for all $1\leq j\leq n-2$, put $t(w_{j+1})=t(w_j)+1$.
	Finally, choose a value $\chi>t(w_{n-1})$ and put 
	$t(x_j)=\chi$, for all $2\leq j\leq n$.
	By construction, $t$ respects Properties~(P1)-(P3), and so 
	$t$ is a HGT-consistent labelling for $\mathcal B(\vec{m})$.
	If $n=2$ then we proceed in a similar manner in that we put $t(x_n)=t(w)+1$.\\

 \noindent {\em Construction of a weak HGT-consistent labelling for $\mathcal B(\vec{m})$ in case $\vec{m_t}$ is not practical:}
 If $\vec{m_t}$ is not practical it must be arc-rich as
$\vec{m_t}$ is the terminal element of the simplification 
sequence of $\vec{m}$.
It suffices to note that a weak HGT-consistent
labelling can be constructed as in the case of a HGT-consistent labelling
noting that the only reticulation vertex of $\mathcal B(\vec{m})$ that violates Property~(P3)
is $s_k'$.
Thus, $t$ satisfies Property~(P3') and so $\mathcal B(\vec{m})$ admits a weak HGT-consistent labelling. 
\hfill{$\Box$}
\end{proof}

As mentioned in the proof of Theorem~\ref{theo:arc-rich}, we \orange{next} illustrate the construction of
$\mathcal B(\vec{m})$ for the ploidy profile $\vec{m}=(77,1,1,1)$ on $X=\{x_1,x_2,x_3,x_4\}$.
Then the vector $\vec{i}=(6,3,2,0)$ is a binary representation for $77$ and 
the phylogenetic network  depicted in
Figure~\ref{fig:core-net-77-1-1-1}(i) is $\mathcal B(\vec{m'})$ where $\vec{m'}=(77)$. Clearly, the
phylogenetic network $\mathcal B(\vec{m})$ depicted in Figure~\ref{fig:core-net-77-1-1-1}(ii) 
obtained from $\mathcal B(\vec{m'})$ by adding the leaves $x_2$, $x_3$, and $x_4$ as indicated realizes
$\vec{m}$ and admits a HGT-consistent labelling. Since the actual time stamp values are of no interest to our discussion, we indicate arcs for which both end vertices have the same time stamp under a HGT-consistent labelling in terms of 
horizontal arcs. 
 \begin{figure}[h]
	\centering
	\includegraphics[scale=0.3]{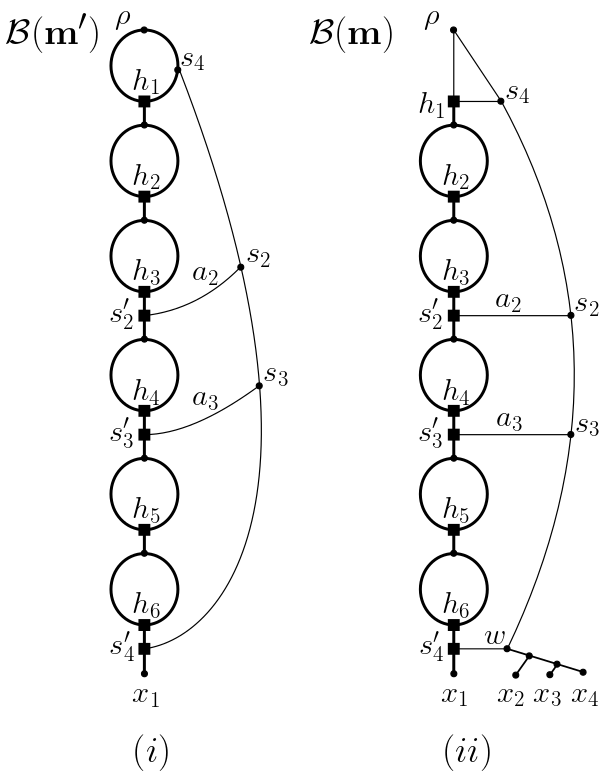}
	\caption{(i) The core network ${\mathcal B}(\vec{m'})$
		for the strictly simple ploidy profile $\vec{m'}=(77)$ on $X=\{x_1\}$. (ii) The core network $\mathcal B(\vec{m})$ for the simple ploidy profile 
		$\vec{m}=(77,1,1,1)$ on $\{x_1,x_2,x_3,x_4\}$ obtained from $\mathcal B(\vec{m'})$. Alternative core networks for $\vec{m}$ can be obtained from $\mathcal B(\vec{m'})$ by subdividing non-bold arcs and attaching
		the remaining elements of $X$ as phylogenetic trees
		on subsets of $X$ or individually (ensuring that the arc
		$(s_3,s_4')$ is subdivided at a least once as otherwise the resulting phylogenetic network does not admit a HGT-consistent labelling because Property~(P3) is violated).
		\label{fig:core-net-77-1-1-1}
	}
\end{figure}
%


As indicated in Figure~\ref{fig:twocores12},  alternative 
choices of a core network for a ploidy profile $\vec{m}$ are conceivable in the sense that it might not be obtained 
by starting with a binary representation of the first component
of $\vec{m}$. Furthermore and perhaps not surprisingly, 
a core network $N$ for $\vec{m}$ generally admits more than one HGT-consistent labelling
in the sense that an alternative HGT-consistent labelling for $N$ might assign for a reticulation vertex $v$
the same time stamp as for $v$ to a different parent of $v$.

\begin{figure}[h]
	\centering
	\includegraphics[scale=0.35]{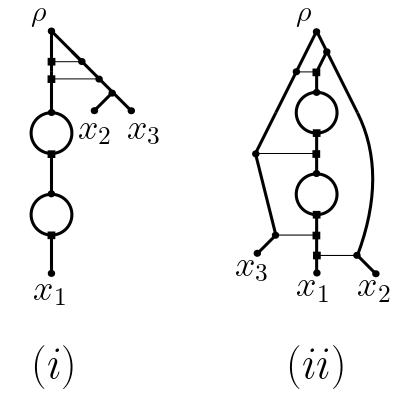}
	\caption{(i) The realization $\mathcal B(\vec{m})$ of the practical ploidy profile 
		$\vec{m}=(12,1,1)$ on $X=\{x_1,x_2, x_3\}$. (ii) A core network for $\vec{m}$
		that is not of the form $\mathcal B(\vec{m})$. 
		\label{fig:twocores12}
	}
\end{figure}  

The fact that the simplification sequence of a ploidy profile $\vec{m}$ is obtained by taking differences 
of the first two consecutive components of 
$\vec{m}$ implies that, in general, properties of ploidy profiles
do not get inherited by ploidy
profiles obtained from $\vec{m}$. For certain types of ploidy profiles this is however not the case as the following consequence of Theorem~\ref{theo:arc-rich} shows.

\begin{corollary}
	\label{cor:special-case}
	Suppose that $\vec{m}$ is a ploidy profile. Then the following holds.
	\begin{enumerate} 
	\item[(i)] If $\vec{m}$ is of the form $(n,n-1,n-2,\ldots, 1)$, $n\geq 3$,  then, for any ploidy profile obtained from $\vec{m}$ by removing at most one of its components, there exists a core network that admits a HGT-consistent labelling.
	\item[(ii)] If $\vec{m}=(m_1,\ldots, m_n)$ then there exists a core network for $(m_1,\ldots, m_n,1)$ that admits a HGT-consistent labelling.
	\item[(iii)] If $\vec{m}=(m_1,\ldots, m_n)$ has a core network that admits a HGT-consistent labelling then the ploidy profile $(2^im_1, \ldots, 2^im_n,2^{i-1},2^{i-2}\ldots, 2^0)$, $i\geq 1$, has a realization that also admits such a labelling. 
	\end{enumerate}
\end{corollary}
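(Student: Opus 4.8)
Each part asks for a core network (parts (i), (ii)) or a realization (part (iii)) admitting a HGT-consistent labelling, and in every case I would route the argument through Theorem~\ref{theo:arc-rich}: to obtain a core network carrying such a labelling it is enough to show that the terminal element $\vec{m_t}$ of the relevant simplification sequence is \emph{practical}. The single fact that powers parts (i) and (ii) is a persistence property of the simplification procedure, which I would isolate as a lemma: \emph{if a ploidy profile contains a component equal to $1$ together with a component that is at least $2$, then so does the profile that follows it in the simplification sequence.} To prove this I would run through the three cases defining $\sigma$: since $1$ is the minimum it is never the top entry $m_1'$, hence it survives the deletion in the case $\alpha=0$, the replacement in the case $\alpha>m_2'$, and the removal in the case $\alpha\le m_2'$ (where the inserted value $\alpha\ge 1$ cannot destroy it); a parallel check shows the maximum entry never drops below $2$ during a reduction step. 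Iterating, the terminal element of such a profile retains a $1$ and is simple, hence equals $(c,1,\ldots,1)$ with $c\ge 2$, i.e.\ is simple but not strictly simple, and therefore practical, so Theorem~\ref{theo:arc-rich} supplies the core network.

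\textbf{Parts (i) and (ii).} Part (ii) is then immediate: the profile $(m_1,\ldots,m_n,1)$ contains the appended $1$ and, as long as $m_1\ge 2$, a component at least $2$, so the lemma applies; the degenerate case $m_1=1$ yields an all-ones profile whose realization is a reticulation-free tree, on which (P3) holds vacuously and (P1), (P2) are met by any time-respecting labelling. For part (i) the profile $(n,n-1,\ldots,1)$ itself contains a $1$ and, as $n\ge 3$, a component at least $2$, so the lemma applies directly. If one deletes a component of value $c$, then for $c\ge 2$ the result still contains the $1$ and (again using $n\ge 3$) a component at least $2$; for $c=1$ the first simplification step subtracts the two leading consecutive entries and reinserts a $1$, after which the lemma takes over. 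In each case $\vec{m_t}$ is practical and Theorem~\ref{theo:arc-rich} finishes the argument.

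\textbf{Part (iii).} Here the conclusion concerns a \emph{realization} of the full profile, not merely a core network, so the lemma alone does not suffice. Writing $\vec{M_i}=(2^im_1,\ldots,2^im_n,2^{i-1},\ldots,2^0)$, I would exploit the recursion that $\vec{M_i}$ arises from $\vec{M_{i-1}}$ (and $\vec{M_1}$ from $\vec m$) by doubling every entry and appending a single $1$, and argue by induction on $i$ that $\vec{M_i}$ has a realization admitting a HGT-consistent labelling. The inductive step is a short graph construction: given a realization $R$ with such a labelling, create a new root $\rho$ whose two children are a fresh leaf $\ell$ and a vertex $v$, place a single bead between $v$ and a reticulation vertex $h$, and attach $R$ below $h$. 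Every root-to-leaf count of $R$ is then doubled while $\ell$ receives a single path, so the new network realizes the doubled-and-extended profile; its only new reticulation $h$ lies inside a bead and is thus exempt from (P3), $\ell$ is an ordinary leaf, and shifting the time stamps of $R$ upward extends the labelling, so HGT-consistency is preserved.

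\textbf{The main obstacle.} The step needing real care is the base case, which is where the hypothesis of (iii) enters: I must convert the given core network of $\vec m$, which realizes only the terminal element $\vec{m_t}$, into a realization of the \emph{full} profile $\vec m$ that still carries a genuine (not merely weak) HGT-consistent labelling. My plan is to run the traceback through $\sigma(\vec m)$ that builds $N(\vec m)$ from the core network and to extend the labelling at each step: the case $\alpha=0$ only replaces a leaf by a cherry, and the case $\alpha\le m_2'$ splices in a reticulation whose incoming arc can be drawn horizontally, while the delicate case $\alpha>m_2'$ introduces a reticulation $u$ with a new parent $v$, which I would label by setting $t(u)=t(v)$ in the open interval between the internal vertices and the leaves, having first pushed all leaf times strictly above all internal times so that this interval is nonempty. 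Verifying that this extension respects (P1)--(P3) simultaneously with the pre-existing bead structure, through possibly several consecutive $\alpha>m_2'$ steps, is the crux of the whole corollary; once the base case is secured, the induction above delivers part (iii).
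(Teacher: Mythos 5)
Your proposal is correct, and for parts (i) and (ii) it reaches the conclusion by the same route as the paper --- show that the terminal element $\vec{m_t}$ of the relevant simplification sequence is practical and invoke Theorem~\ref{theo:arc-rich} --- but via a different key lemma: the paper computes $\vec{m_t}$ explicitly (it is $(2,1,\ldots,1)$ or $(3,1,\ldots,1)$ in case (i), and is merely asserted to be of the form $(m,1,\ldots,1)$ with $m\geq 2$ in case (ii)), whereas your persistence lemma (a component equal to $1$ and a component at least $2$ survive every simplification step, so $\vec{m_t}$ is simple but not strictly simple) proves both claims uniformly and actually supplies the justification the paper omits in (ii); your separate handling of the degenerate all-ones profile is also more careful than the paper, which tacitly assumes $m_1\geq 2$. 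For part (iii) you genuinely diverge: the paper builds $N(\vec{m})$ from the given core network, separately builds a realization $N''$ of $\vec{m''}=(2^{i-1},\ldots,2^0)$ (whose terminal element $(2,1)$ is practical), and then glues $N(\vec{m})$ below a single new reticulation $s''$ inserted on the incoming arc of the leaf of $N''$ with $2^{i-1}$ paths, assigning $t(s')=t(s'')$ to satisfy (P3); your induction on $i$, adding one bead and one pendant leaf at the root per step, realizes the same profile and is if anything easier to verify, since the new reticulation lies in a bead and is exempt from (P3). Both arguments rest on the same pivotal claim --- that the traceback through $\sigma(\vec{m})$ propagates a HGT-consistent labelling from the core network to $N(\vec{m})$ --- which the paper dispatches with ``it follows by construction''; your sketch (each traceback step introduces either a cherry or a single reticulate cherry whose reticulation arc can be drawn horizontally, after pushing leaf times above internal times) is precisely the justification implicit in the paper and in Theorem~\ref{theo:specialcase}, so I see no genuine gap.
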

\begin{proof}
(i) Let $\vec{m'}$ denote a ploidy profile obtained from $\vec{m}$
 	as described in the statement of the corollary. Then 
	\orange{the difference between any two consecutive component values in $\vec{m}$ is $1$ if no component is removed from $\vec{m}$ (i.\,e.\,$\vec{m}=\vec{m'}$) or if a component is removed from $\vec{m}$ to obtain $\vec{m'}$ whose value is not 2. In either of these two cases, it follows that the terminal element $\vec{m'_t}$  of the 
		simplification sequence for $\vec{m'}$ is of the form $(2,1,\ldots, 1)$. If the component with value 2 is removed to obtain $\vec{m'}$ from $\vec{m}$ then the
	 terminal element $\vec{m'_t}$ of $\sigma(\vec{m'})$ is of the form  $(3,1,\ldots, 1)$ as that ploidy profile is simple. In either of these cases,  $\vec{m'_t}$ is practical.}
	Applying
 	Theorem~\ref{theo:arc-rich} implies the result. 
 	
	(ii) To see the assertion, it suffices to note that the terminal element
 	of the simplification sequence for $\vec{m'}=(m_1,\ldots, m_n,1)$ is practical
 	\orange{because it is of the form $(m,1, \ldots, 1)$, some $m\geq 2$.}

(iii) Put $\vec{m'}=(2^im_1, \ldots, 2^im_n,2^{i-1},2^{i-2}\ldots, 2^0)$. Let
$\mathcal B(\vec{m})$ initialize the construction of
 $N=N(\vec{m})$. Since, by assumption, $\mathcal B(\vec{m})$ admits
 a HGT-consistent labelling, it follows by construction that
 $N$ also admits such a labelling.
 Let $t:V(N)\to\mathbb R_{\geq 0 }$ denote a HGT-consistent labelling for $N$. 
 
 Next, consider the ploidy profile $\vec{m''}=(2^{i-1},2^{i-2}\ldots, 2^0)$ on $X$ where $x_{n+j}$ indexes
 $2^{i-j}$,	for all $1\leq j\leq i$. Then construct the core network $\mathcal B(\vec{m''})$
 for $\vec{m''}$. Since $\vec{m''_t}=(2,1)$ and therefore is not strictly simple 
 $\mathcal B(\vec{m''})$ admits a HGT-consistent
 labelling. Initializing the construction of 
 $N''= N(\vec{m''})$ with $\mathcal B(\vec{m''})$ implies that
 $N''$  also admits a HGT-consistent labelling $t'':V(N'')\to\mathbb R_{\geq 0 }$.
 
 Next, construct a realization $N'$ for $\vec{m'}$ from $N$ and $N''$ by subdividing
 	the incoming arc of $x_{n+1}$ by \orange{two new vertices $s$ and $s'$
 	such that $s'$ is below $s$. Next, add a further vertex $s''$ and the arcs 
 	$(s,s'')$, $(s',s'')$, and $(s'',q)$ where $q$ is}  the root of $N$ to obtain $N'$.
 	To obtain a HGT-consistent labelling $t':V(N')\to \mathbb R_{\geq 0}$
 	for $N'$ put $t'(v)=t''(v)$ for all vertices $v$
 	of $N'$ that are also vertices in $N''$. Next, choose 
 	a value $t''(p)<\omega< t''(x_{n+1})$ where $p$ is the parent
 	of $x_{n+1}$ in $N''$ and put \orange{$t'(s)=\omega$,
 		$t'(s')=t'(s'')=\omega+\epsilon$, some $\epsilon >0$ sufficiently small, and} 
 		  $t'(q)=t'(s)+1$. Finally, for all vertices
 	$v$ in $N$ put $t'(v)=t(v)+ t'(q)+1$. Since $t''$ 
 	is a HGT-consistent labelling for $N''$
 	and $t$ is such a labelling for $N$ it follows by construction
 	that $t'$ is a HGT-consistent labelling for $N'$.
	\hfill{$\Box$}

\end{proof}

To help illustrate Corollary~\ref{cor:special-case}(iii), consider the
ploidy profile $\vec{m'}=(40,24,8,$ $4,2,1)=(2^3\times 5, 2^3\times 3, 2^3\times 1,2^2,2^1,2^0)$ on $X=\{x_1,\ldots, x_6\}$. Then $i=3$
and $\vec{m}$ is the ploidy profile $(5,3,1)$ on $\{x_1,x_2,x_3\}$. 
Hence, $\vec{m_t}=(2,1,1)$.  By Theorem~\ref{theo:arc-rich},
$\mathcal B(\vec{m})$ admits a HGT-consistent labelling
because $\vec{m_t}$ is practical.
Initializing the construction of $N=N(\vec{m})$ with $\mathcal B(\vec{m})$ implies that $N$ also admits a HGT-consistent
labelling. The part of the
network $N'$ pictured in Figure~\ref{fig:powersoftwo} 
\begin{figure}[h]
	\centering
		\includegraphics[scale=0.3]{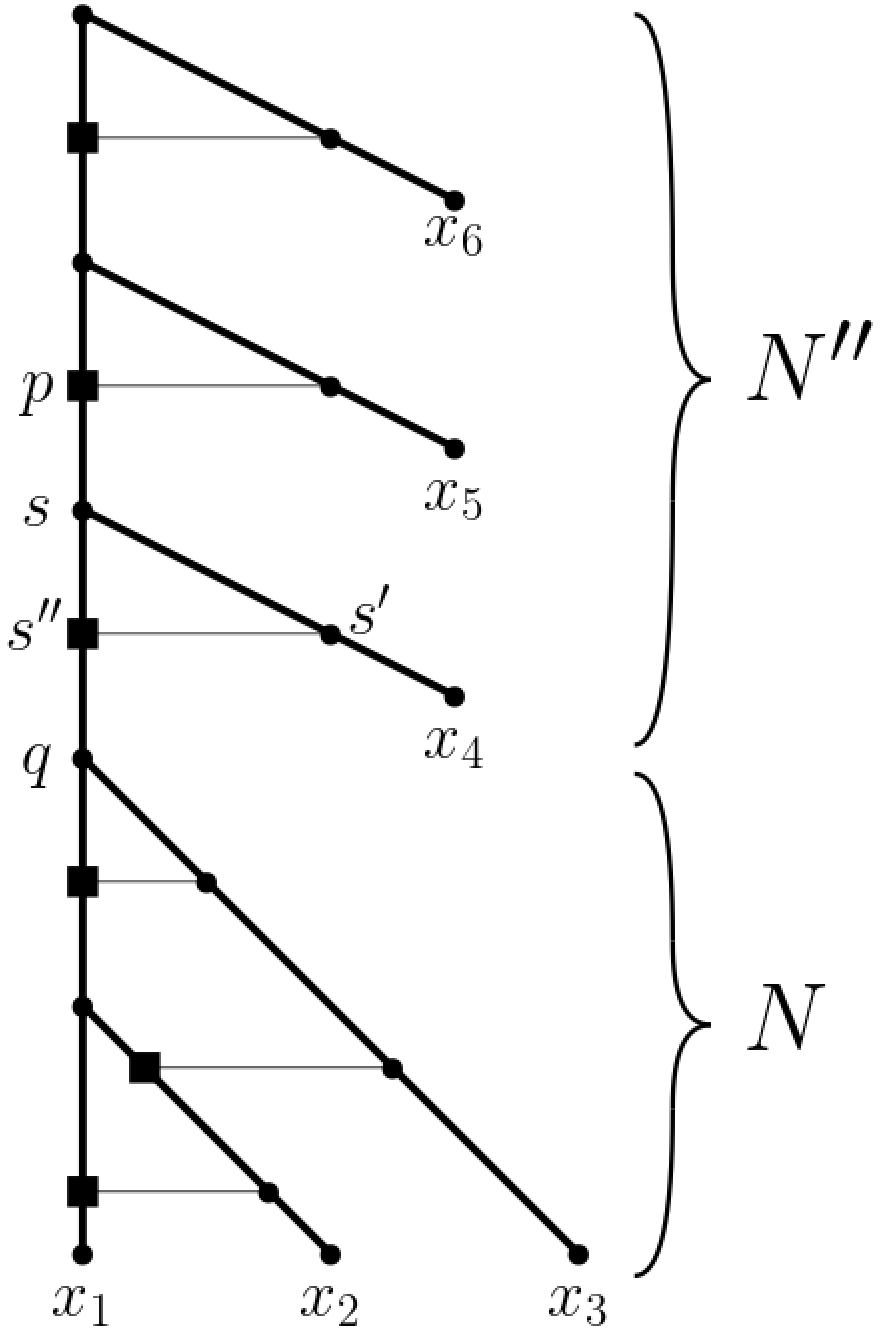}
	\caption{The realization $N'$ of the ploidy profile $\vec{m'}=(40,24,8,4,2,1)$ in terms of a phylogenetic network
		with horizontal arcs.  
		\label{fig:powersoftwo}
	}
\end{figure}  
that is labelled $N$
is that realization with a HGT-consistent labelling indicated 
in terms of horizontal arcs.  
The part of $N'$ labelled $N''=N(\vec{m''})$
is a realization of the ploidy profile $\vec{m''}=(4,2,1)$
on $\{x_4,x_5,x_6\}$ once \orange{the} three incident arcs \orange{of $s''$} are ignored and \orange{$s''$ and} the resulting vertices with indegree and outdegree one are suppressed. By construction, $N'$ is a realization 
of $\vec{m'}$ that admits a HGT-consistent labelling (again indicated in terms of horizontal arcs). 

We conclude this section with remarking that Corollary~\ref{cor:special-case}(ii) is of particular interest from a ``ghost species''  point of view in that the element $x_{n+1}$
indexing the last component of $(m_1,\ldots, m_n,1)$ could represent a taxon with ploidy level one   that has not been sampled yet  \orange{(see e.\,g.\,\cite{SBPVdHCPRR22} for the case of banana).}

\section{When is $N(\vec{m})$ a tree with additional horizontal arcs?}
\label{sec:weakly-orchard-and-beyond}
As was established in \cite[Section 2.1]{IJJM21}, beadless
phylogenetic networks that admit a HGT-consistent labelling  are precisely the ones that admit a so called complete cherry reduction sequence. These types of sequences essentially record how to transform
a \orange{(beadless)}  phylogenetic network into a single vertex by applying only
operations on pairs of leaves, provided this is possible. In view of \cite[Theorem 1]{IJJM21} and \cite{ESS19}, their attraction lies in the fact that they can be used to quickly check if a given \orange{(beadless)} phylogenetic network can be represented with horizontal arcs without having to find a HGT-consistent labelling for it first. Therefore it is of interest to see if an analogous result holds for our types of phylogenetic networks. To be able to shed light into this question, we
first need to  extend the concept of a cherry reduction sequence to our types of phylogenetic networks. For this we require further terminology. 

Suppose that $N$ is a phylogenetic network on $X$. Assume first that $X$ has at least two elements and that $a$ and $b$ are distinct elements in $X$. 
If $\{a,b\}$ is a cherry of $N$ then 
we refer to the operation of deleting $b$ and its incoming arc and suppressing 
the resulting vertex of indegree and outdegree one
as \textit{reducing} the cherry $\{a,b\}$ by $b$. 
We denote this operation by $reduce(a,b)$.
Note that if the joint parent of 
$a$ and of $b$ is the root $\rho$ of $N$
and $N$ therefore has leaf set $\{a,b\}$, then this
operation also includes post-processing the resulting graph
by collapsing the unique arc from $\rho$ to $a$ 
to obtain the single vertex $a$. 
If  $a$ and $b$ form a reticulated cherry of $N$ such that $b$ 
is the reticulation leaf then we refer to the operation of deleting the reticulation arc 
and suppressing the resulting vertices of indegree and 
outdegree one as \textit{cutting} the cherry  $\{a,b\}$. 
We denote this operation by $cut(a,b)$. 
For example, deleting the thin arc incident with the 
parent of $x_1$ in the network $N_0$ pictured in
Figure~\ref{fig:simpli-traceback-cherry-picking-only} is the cutting operation
$cut(x_2,x_1)$. Deleting the leaf $x_1$ in the network
$N_1$ pictured in that figure is the reducing operation
$reduce(x_2,x_1)$. 

Finally assume that $a$ is the sole element of $X$. Then we refer to the set
$\{a\}$ as a {\em type-1 degenerate cherry} if the parent of $a$
is the reticulation vertex in a bead $B$. In this case we call
the operation of removing one of the two arcs of $B$,
suppressing resulting vertices with indegree and outdegree one,
and also collapsing the unique outgoing arc of the tree
vertex of $B$ if that has rendered it a vertex of outdegree one
as {\em simplification} of $a$. We denote this operation
by $simp(a)$. Furthermore, we refer to the set $\{a\}$ as a {\em type-2 degenerate cherry} if $a$ has a parent $p$ that
is a reticulation vertex and either \orange{(i)} precisely one of the parents \orange{$q_1$ and $q_2$} of $p$
is the reticulation vertex of a bead or \orange{(ii) there exists a further vertex $q$ 
	such that  $N$ also contains (a) the three arcs $(q,q_1)$, $(q,q_2)$, and $(q_1,q_2)$, or (b) the arc $(q_1,q)$ and  a pair of arcs from $q$ to $q_2$. Assuming that Case (ii)  holds then,}
we refer to the operation of deleting the arc \orange{$(q_1,p)$ (Case (a)) and deleting one of the arcs from $q$ to $q_2$ (Case (b))} and in each case suppressing the two resulting vertices of indegree and outdegree 1 as {\em trimming} of $a$.

We denote this operation as
$trim(a)$. For example for the 
network pictured in Figure~\ref{fig:core-net-77-1-1-1}(i), the trimming operation $trim(x_1)$
consists of deleting the arc $(s_3,s_4')$ and suppressing the vertices  $s_4'$ and $s_3$.  Removing one of the two arcs in the bead in the network
depicted in Figure~\ref{fig:twocores12}(i) that contains the 
parent of $x_1$
is the simplification operation $simp(x_1)$.

It is easily seen that the operations of reducing a cherry and cutting a reticulated cherry both result in a phylogenetic 
network where, for technical reasons, we refer in this context to an isolated vertex $a$ also as a phylogenetic network on $\{a\}$. Collectively, these two operations are usually referred to as  {\em cherry reduction operations}. Since our type of phylogenetic networks may contain beads, we extend this convention
by collectively referring to a cherry reduction operation, a simplification of a type-1 degenerate cherry, and  
the trimming of a type-2 degenerate cherry as a {\em cherry modification operation}.

Following \cite{BESS21},  we call a sequence $\chi$ of elements in $X$ a
{\em complete cherry reduction sequence} for a beadless phylogenetic network
$N$ on $X$ if either (i) $\chi$ only contains $N$ if $N$ is a single vertex or (ii) $\chi$ is the sequence 
$N = N_0, N_1, N_2, \ldots, N_k, N_{k+1}$ of phylogenetic networks $N_i$, $0\leq i\leq k+1$, such that, for all $1\leq i\leq k+1$, the network 
$N_i$ is obtained from $N_{i-1}$ by a (single) cherry
reduction operation and $N_{k+1}$ is a single vertex. 
A (beadless) phylogenetic network that admits a  complete cherry reduction sequence is also called an {\em orchard}. 
With this in mind, we say that a phylogenetic network $N$ of our type has a {\em complete cherry modification sequence} if
$N$ has an augmented complete cherry reduction sequence 
in the sense that, in addition to cherry reduction operations, the only other permitted operation is 
simplification of a type-1 degenerate cherry.
For consistency reasons, we call a phylogenetic network 
that admits a complete cherry modification sequence also an {\em orchard} in this case.

 Similarly,
 we call a sequence of cherry modifications operations a {\em complete weak cherry modification sequence} for $N$,  if $N$ has an augmented complete cherry modification sequence 
 in the sense that, in addition to cherry reduction operations and the
 simplification of type-1 degenerate cherries, the trimming of a type-2 degenerate 
 cherry is also allowed. In that case, we also call $N$  
 a {\em weak orchard}.
 
For example and bearing in mind that the leaf labels are 
affected by the operations that govern the generation
of the simplification sequence for $\vec{m}$, 
the sequence of phylogenetic networks depicted
in Figure~\ref{fig:simpli-traceback-cherry-picking-only} read from right to left 
combined with the cherry modification sequence 
of the core network $\mathcal B(\vec{m})$ of $\vec{m}=(7,6,6,5)$
pictured in Figure~\ref{fig:gen-cherry-B(m)}
\begin{figure}[h]
	\centering
	\includegraphics[scale=0.4]{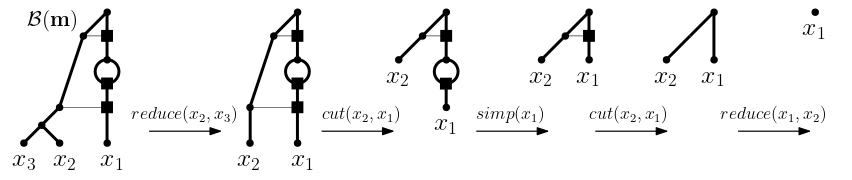}
	\caption{A complete cherry modification sequence for 
		the core network $\mathcal B(\vec{m})$ of $\vec{m}=(7,6,6,5)$. The applied cherry modifications operations are indicated above the arrows between the networks. 
		\label{fig:gen-cherry-B(m)}
	}
\end{figure}  
is a  complete cherry modification sequence for 
the realization $N(\vec{m})$ of $\vec{m}$ depicted in Figure~\ref{fig:simpli-traceback-cherry-picking-only}. On the
other hand, 
the sequence presented in Figure~\ref{fig:weak-orchard}
	\begin{figure}[h!]
	\centering
	\includegraphics[scale= 0.25]{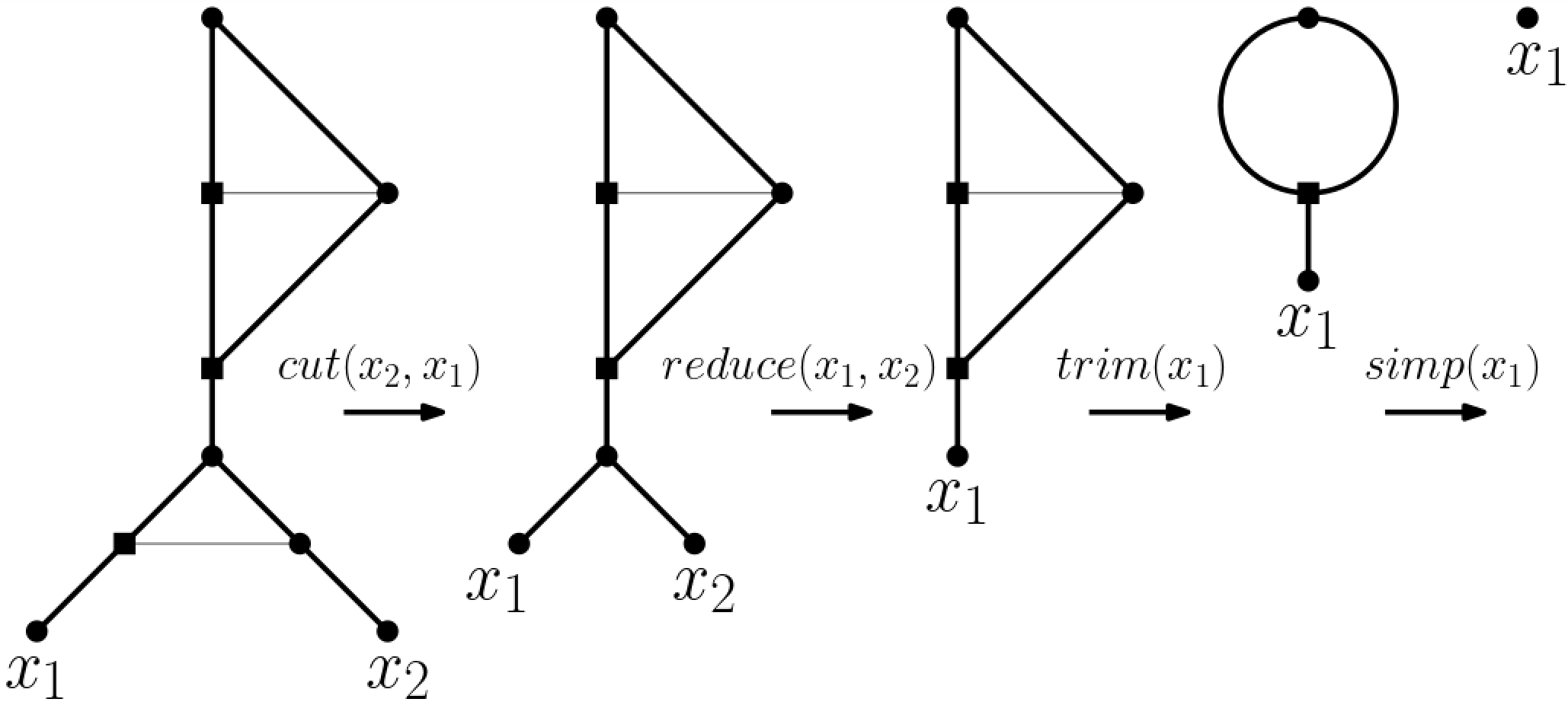}
	\caption{A weak cherry modification sequence for the \orange{realization $N(\vec{m})$, depicted on the left, of the} ploidy profile $\vec{m}=(6,3)$ on $\{x_1,x_2\}$. 
	\label{fig:weak-orchard}}
\end{figure}
is a weak cherry modification sequence for
$N(\vec{m})$ where $\vec{m}$ is the ploidy profile $(6,3)$ on $\{x_1,x_2\}$.

Note that neither a complete cherry reduction sequence 
nor a complete weak cherry modification sequence might 
exist for \orange{a realization of} a ploidy profile  and also that, in case it does
exist, \orange{such a realization} might have more than one.

 The fact that an orchard and also a weak orchard 
 induces a ploidy profile by
 taking numbers of directed paths from the root of the network
 to each of its leaves lies at the heart of our extension of 
 these concepts to ploidy profiles. More precisely, if $\vec{m}$ is 
a ploidy profile that is realized by a phylogenetic 
network $N$ and $N$ is an orchard then we call
 $\vec{m}$ an {\em orchard (with respect to $N$)}. 
 If $N$ is a weak orchard
 then we call $\vec{m}$  a {\em weak orchard (with respect to $N$)}. 
  To simplify terminology, we refer to $\vec{m}$ simply as an orchard or a weak orchard 
  if the knowledge of $N$ is of no relevance to the discussion.
    For example and each time initializing the construction of the realization $N(\vec{m})$ by 
    $\mathcal B(\vec{m})$, the ploidy profile $\vec{m}=(7,6,6,5)$ is an  orchard with respect to $N(\vec{m})$, and the ploidy profile
    $\vec{m}=(6,3)$ is a weak orchard with respect to
    its realization $N(\vec{m})$. Thus, $\vec{m}=(7,6,6,5)$ is an orchard and $\vec{m}=(6,3)$
    is a weak orchard. Furthermore,
    \orange{an exhaustive search for the ploidy profile $\vec{m}=(3)$ shows that there
    exist ploidy profiles that are a weak orchard but not an orchard.}
      
    The next result formalizes a link suggested by these two examples between complete cherry 
    modification sequences and simplification sequences. At its heart lies a 
    characterization of (beadless) orchards in terms of HGT-consistent labellings \cite[Theorem 1]{IJJM21}.



\begin{theorem}
	\label{theo:specialcase}
	Suppose $\vec{m}$ is a ploidy profile on $X$. If 
	$\vec{m}$ is practical
	  then every ploidy profile in  the simplification sequence $\sigma(\vec{m})$ of $\vec{m}$ is  orchard. 
	  Furthermore, the  traceback through $\sigma(\vec{m})$ combined with
	 a cherry modification sequence for $\mathcal B(\vec{m})$ gives rise to a
	complete cherry modification sequence for $N(\vec{m})$
	provided the construction of $N(\vec{m})$ is initialized with $\mathcal B(\vec{m})$.	
	\end{theorem}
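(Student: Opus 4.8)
The plan is to first notice that the practicality hypothesis collapses the simplification machinery. Since $\vec{m}$ is practical it is in particular simple, so by definition $\sigma(\vec{m})$ consists of the single profile $\vec{m}=\vec{m_t}$, and the construction of the realization then sets $N(\vec{m})=\mathcal B(\vec{m})$. Consequently the traceback through $\sigma(\vec{m})$ is empty, and both assertions reduce to the single claim that the core network $\mathcal B(\vec{m})$ admits a complete cherry modification sequence: the first assertion becomes the statement that $\vec{m}$ is an orchard with respect to $\mathcal B(\vec{m})$, and the ``furthermore'' is then immediate since there is nothing for the traceback to contribute.

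It therefore remains to produce a complete cherry modification sequence for $\mathcal B(\vec{m})$. The key input is Theorem~\ref{theo:arc-rich}: because $\vec{m_t}=\vec{m}$ is practical, $\mathcal B(\vec{m})$ carries a HGT-consistent labelling $t$. I would then lean on the characterization \cite[Theorem~1]{IJJM21}, which states that a \emph{beadless} network admits a HGT-consistent labelling exactly when it admits a complete cherry reduction sequence, and upgrade it to our beaded setting. The natural route is induction on $|V(\mathcal B(\vec{m}))|$, showing that the labelling $t$ always exposes an admissible cherry modification and that $t$ restricts to a HGT-consistent labelling on the network obtained after performing it.

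Concretely, in the inductive step I would read off an admissible operation from $t$ and from the explicit shape of $\mathcal B(\vec{m})$ given in the proof of Theorem~\ref{theo:arc-rich}. While $\mathcal B(\vec{m})$ still carries several leaves, a non-leaf vertex of maximal time stamp yields either an ordinary cherry among the leaves of the attached caterpillar on $\{x_2,\ldots,x_n\}$, which I reduce, or a reticulate cherry presenting $x_1$ as reticulation leaf through one of the horizontal arcs $a_j=(s_j,s_j')$, which I cut; deleting a leaf or a horizontal arc leaves $t$ HGT-consistent on the smaller network. Once a single leaf survives over the beaded spine, its bottom bead is a type-1 degenerate cherry, which I remove by simplification, again preserving $t$; iterating collapses $\mathcal B(\vec{m})$ to a single vertex.

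The hard part will be the bead bookkeeping in the passage from the beadless characterization to our networks. A type-1 degenerate cherry, and hence the simplification operation, is only defined when a single leaf is present, so the operations must be ordered so that each bead becomes a type-1 degenerate cherry exactly when the network has been reduced to the leaf $x_1$. I would have to verify that cutting the horizontal-arc reticulate cherries never strands a caterpillar leaf or dismantles a bead prematurely, and that the restricted map continues to satisfy Properties~(P1)--(P3) after every modification. Establishing that such an ordering is always available---so that \cite[Theorem~1]{IJJM21} governs the beadless skeleton while the beads are absorbed by simplifications---is the crux on which the whole argument turns.
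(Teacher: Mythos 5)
Your opening move --- reading ``$\vec{m}$ is practical'' literally, concluding that $\sigma(\vec{m})$ is the one-term sequence $(\vec{m})$ and that the traceback is therefore empty --- collapses the theorem to the single claim that $\mathcal B(\vec{m})$ is an orchard, and this is where your proposal diverges from what the result is actually about. The paper's own proof (and the surrounding text, e.g.\ the worked claim that $(7,6,6,5)$, which is not simple, is an orchard with respect to $N(\vec{m})$) makes clear that the hypothesis is meant to apply when the \emph{terminal element} $\vec{m_t}$ of $\sigma(\vec{m})$ is practical, so that $\sigma(\vec{m})$ is in general a nontrivial sequence and $N(\vec{m})$ is built from $\mathcal B(\vec{m})$ by a genuine traceback. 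The entire substance of the paper's argument is the step you have declared vacuous: one checks, case by case ($\alpha=0$, $\alpha>m_2'$, $\alpha\le m_2'$), that each traceback step attaches either a cherry $\{x_1',x_2'\}$ or a single reticulate cherry on $\{x_1',x_2'\}$ with reticulation leaf $x_1'$, so that each step is undone by a single $reduce$ or $cut$ operation; prepending these operations to a complete cherry modification sequence for $\mathcal B(\vec{m})$ then yields one for $N(\vec{m})$, and the intermediate networks witness that every profile in $\sigma(\vec{m})$ is orchard. None of this appears in your proposal.

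For the part you do address --- producing a complete cherry modification sequence for $\mathcal B(\vec{m})$ itself --- the paper simply invokes Theorem~\ref{theo:arc-rich} together with ``a canonical extension'' to beaded networks of the characterization of orchards via HGT-consistent labellings from \cite{IJJM21}. Your inductive sketch is a reasonable attempt to actually carry out that extension, and is in that respect more explicit than the paper; but you concede yourself that ordering the operations so that each bead becomes a type-1 degenerate cherry at the right moment is ``the crux on which the whole argument turns'' and you do not establish it. For the explicit networks $\mathcal B(\vec{m})$ this is doable (cut the horizontal arcs $a_j$, reduce the caterpillar on $\{x_2,\ldots,x_n\}$, then simplify the beads of the spine from the bottom up, as in Figure~\ref{fig:gen-cherry-B(m)}), but as written it remains an acknowledged gap. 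In short: even granting your reading of the hypothesis the proof is incomplete, and under the intended reading it omits the main argument.
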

	\begin{proof}
		Since $\vec{m}$ is practical, Theorem~\ref{theo:arc-rich}
		implies that $\mathcal B(\vec{m})$ admits a HGT-consistent
		labelling. Combined with a canonical extension of
		 \cite[Theorem 1]{IJJM21} to our types of phylogenetic networks, it follows  that 
		there exists a complete cherry modification sequence for $\mathcal B(\vec{m})$. To see that 
		$N(\vec{m})$ has a complete cherry
		modification sequence it therefore suffices to show that at each step in the traceback of $\sigma(\vec{m})$ 
		only a cherry or a reticulate cherry is introduced.
		
		\orange{Assume for the remainder that the construction of $N(\vec{m})$ is initialized with $\mathcal B(\vec{m})$. Then 
			$N(\vec{m_t})$ has a complete cherry modification sequence by assumption on $\mathcal B(\vec{m})$ as $N(\vec{m_t})$  is $\mathcal B(\vec{m})$.}
		Using the same
		notation as in the construction of $N(\vec{m})$ outlined in Section~\ref{sec:realization} either $\alpha= 0$, or $\alpha>m_2'$, or $\alpha\leq m_2'$. Let $N''$ denote a realization for
		$\vec{m''}$ constructed from $\mathcal B(\vec{m})$ as described in the construction of $N(\vec{m})$. 

Employing the same indexing scheme as in the construction of 
$N(\vec{m})$, it follows that to
realize $\vec{m'}$, the leaf indexing the first component of $\vec{m''}$ is replaced by the cherry $\{x_1',x_2'\}$ if $\alpha =0$.
In the two remaining cases a single reticulate cherry on $\{x_1',x_2'\}$ with reticulate leaf $x_1'$ is generated. Thus, 
the generated realization of $\vec{m'}$, i.\,e.\, $N(\vec{m'})$, is orchard. It follows that
		every ploidy profile in $\sigma(\vec{m})$ is orchard.
		 The remainder of the theorem is an immediate
		consequence.
		\hfill{$\Box$}

		\end{proof}

Since, as mentioned in the proof of Theorem~\ref{theo:specialcase}, the reversal of the operations to construct the network
$N(\vec{m})$ from the core network $\mathcal B(\vec{m})$ corresponds to applying a single cherry reduction operation \orange{in each step of the traceback through $\sigma(\vec{m})$},
the companion result for ploidy profiles 
where $\mathcal B(\vec{m})$ admits a weak HGT-consistent labelling also holds.
Put differently, the result stated in Theorem~\ref{theo:specialcase} with the text ``If $\vec{m}$ is practical'' omitted, the word ``orchard'', replaced by ``weak orchard'' and the text ``concatenated with
a cherry modification sequence for $\mathcal B(\vec{m})$ results in a
complete cherry modification sequence for $N(\vec{m})$'' replaced by ``concatenated with
a weak cherry modification sequence for $\mathcal B(\vec{m})$ results in a
complete weak cherry modification sequence for $N(\vec{m})$'' also holds.

Intriguingly, the core network $\mathcal B(\vec{m})$ for $\vec{m}=(77,1,1,1)$ depicted in Figure~\ref{fig:core-net-77-1-1-1}(i) gives rise to a phylogenetic tree on $\{x_1,\ldots,x_4\}$ by deleting all horizontal arcs and removing one arc from each bead (each time suppressing the resulting vertices of indegree and outdegree one and the root in case this has rendered it a vertex with outdegree one). Beadless phylogenetic networks that enjoy this property are called tree-based \cite{FS15} and have recently attracted a considerable amount 
	of attention in the phylogenetic networks community (see, for example, \cite[Chapter 10.4.2]{S16}) since they can be thought of as phylogenetic trees to which arcs have been added. More precisely, a phylogenetic network $N$
	is called {\em tree-based} if there exists a phylogenetic tree $T$
	such that when first adding an incoming arc to the 
	root of  $T$ to obtain a tree $T'$
	and then subdividing some of the arcs of $T'$ and adding arcs between the generated subdivision vertices (ensuring
	that no directed cycle is created and the overall
	degree sum of the subdivision vertices is 3) the resulting
	directed graph is $N$. In that case, $T$ is called a
	{\em base tree} for $N$.
	
	As it turns out, the above
	observation for $\vec{m}=(77,1,1,1)$ and $\mathcal B(\vec{m})$ is not a coincidence as the following more general result holds. 

\begin{theorem}
	\label{theo:treebased}
	Suppose $\vec{m}$ is a ploidy profile on $X$. If $\vec{m}$ is practical then the network $N(\vec{m})$ generated from $\mathcal B(\vec{m})$ 
	is tree-based.
\end{theorem}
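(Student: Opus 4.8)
The plan is to first reduce the theorem to a purely structural statement about the core network. Since $\vec m$ is practical it is in particular simple, so its simplification sequence $\sigma(\vec m)$ consists of $\vec m$ alone and $\vec m=\vec m_t$; hence the traceback through $\sigma(\vec m)$ performs no steps and $N(\vec m)=\mathcal B(\vec m)$. It therefore suffices to show that $\mathcal B(\vec m)$ is tree-based. To set up the argument I would recall, from the construction of $\mathcal B(\vec m)$ in the proof of Theorem~\ref{theo:arc-rich}, that every reticulation vertex of $\mathcal B(\vec m)$ is of exactly one of two kinds: either it lies in a bead (the ``autopolyploidy'' reticulations $h_i$ on the main beaded chain from the root $\rho$ to $x_1$), or it is one of the vertices $s_j'$ whose second incoming arc is the horizontal arc $(s_j,s_j')$ added in the construction (these occur precisely when $m_1$ is not a power of two). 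Every other internal vertex, namely the bead tree vertices, the side-chain vertices $s_j$, the attachment vertex $w$, and the vertices of the caterpillar realizing $x_2,\dots,x_n$ in the non-strictly-simple case, is a tree vertex. The first thing to check carefully is therefore that these two families genuinely exhaust the reticulations of $\mathcal B(\vec m)$.

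The core of the proof is to exhibit the set $\mathcal A$ of arcs to delete, namely one arc from each bead together with every horizontal arc $(s_j,s_j')$ (equivalently, the unique non-vertical incoming arc of each $s_j'$), and to prove that $S:=\mathcal B(\vec m)-\mathcal A$ is a spanning tree of $\mathcal B(\vec m)$, rooted at $\rho$, in which every vertex of outdegree zero is a leaf of $\mathcal B(\vec m)$. By the standard characterisation of tree-based networks via such support trees \cite{FS15}, this shows $\mathcal B(\vec m)$ is tree-based: the base tree $T$ is obtained from $S$ by suppressing all resulting vertices of indegree and outdegree one (and the root, should it have acquired outdegree one), and $\mathcal B(\vec m)$ is recovered from $T$ by subdividing arcs and re-inserting the arcs of $\mathcal A$ between the generated subdivision vertices. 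This is exactly the deletion-and-suppression certificate already used for $\mathcal B(\vec m)$ with $\vec m=(77,1,1,1)$ in the discussion preceding the theorem.

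To establish the spanning-tree claim I would argue in two steps. First, deleting one arc from each bead and the horizontal arc into each $s_j'$ removes exactly one incoming arc from every reticulation vertex of $\mathcal B(\vec m)$; combined with the fact that all remaining vertices are tree vertices or leaves, this leaves every vertex other than $\rho$ with indegree one and $\rho$ with indegree zero, so $S$ is a spanning forest rooted at $\rho$. Second, I would verify that no internal vertex becomes a sink and that every leaf remains reachable from $\rho$: the tail of a deleted bead arc retains its parallel arc to the same bead reticulation; the tail $s_j$ of a deleted horizontal arc retains its side-chain out-arc (and $w$ retains its out-arc into the caterpillar); and the head $s_j'$ retains both its vertical main-chain incoming arc and its outgoing arc. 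Hence every non-leaf vertex of $S$ keeps outdegree at least one and stays reachable, so the sinks of $S$ are precisely $x_1,\dots,x_n$, as required.

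The one delicate point is the bead incident with the root, one strand of which is subdivided by $s_k$ (or by $w$ when $m_1=2^l$), since there the two incoming arcs of $h_1$ have distinct tails $\rho$ and $s_k$ rather than a common one. Here I would delete the arc $(s_k,h_1)$, keeping $(\rho,h_1)$, so that $\rho$ retains outdegree two (one branch continuing the main chain to $x_1$, the other running through $s_k$ towards the leaves $x_2,\dots,x_n$) and $s_k$ keeps its side-chain out-arc; both deleted endpoints are then ordinary subdivision vertices of the base tree, matching the definition, while the alternative deletion $(\rho,h_1)$ is the instance handled by the ``suppress the root if it has outdegree one'' clause. The strictly simple case $\vec m=(2^l)$ is the degenerate version of the whole argument: $\mathcal B(\vec m)$ is a chain of beads, $S$ is the single directed path to $x_1$, its unique sink $x_1$ is the sole leaf, and $\mathcal B(\vec m)$ is tree-based with the one-vertex base tree. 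The main obstacle is thus not the global structure of the argument but this root bookkeeping together with the verification that the two families of reticulations above are exhaustive, both of which follow directly from the explicit construction of $\mathcal B(\vec m)$.
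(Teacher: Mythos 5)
Your reduction of the theorem to the single claim that $\mathcal B(\vec{m})$ is tree-based rests on reading the hypothesis ``$\vec{m}$ is practical'' literally, so that $\vec{m}$ is simple, $\sigma(\vec{m})$ consists of $\vec{m}$ alone, and the traceback is empty. That reading is consistent with the wording of the statement, but it is not the theorem the paper actually proves or uses: the paper's own proof invokes Theorem~\ref{theo:specialcase} together with ``the fact that the added horizontal arcs of $N(\vec{m})$ correspond to reticulation arcs in reticulate cherries'', which only has content when the traceback adds arcs, and the result is later applied to the Viola profile, which is not simple. The intended hypothesis (as in Theorem~\ref{theo:arc-rich}) is that the \emph{terminal element} $\vec{m_t}$ of $\sigma(\vec{m})$ is practical. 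Under that reading your argument establishes only the base case $N(\vec{m_t})=\mathcal B(\vec{m})$ and is missing the step that tree-basedness survives the traceback: you would need to observe that the $\alpha=0$ step merely replaces a leaf by a cherry, and that each of the other two steps creates exactly one new reticulation vertex, namely the parent of the reticulation leaf of a reticulate cherry, whose reticulation arc can simply be added to your deletion set $\mathcal A$ without turning any internal vertex into a sink. That observation is precisely what the paper extracts from Theorem~\ref{theo:specialcase}.

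That said, the part of your argument that is present is sound and proceeds by a genuinely different route from the paper. The paper derives tree-basedness indirectly from the orchard property (the existence of a complete cherry modification sequence) of $N(\vec{m})$, whereas you exhibit an explicit support tree of $\mathcal B(\vec{m})$: delete one arc from each bead and the non-vertical incoming arc of each $s_j'$, then check that every reticulation vertex loses exactly one incoming arc and no internal vertex becomes a sink. Your inventory of the reticulation vertices of $\mathcal B(\vec{m})$ (the bead reticulations $h_i$ and the vertices $s_j'$) and your treatment of the root bead subdivided by $s_k$ are both correct. This direct certificate is more informative than the paper's proof in that it actually names a base tree; once the traceback step above is added, it yields a complete and arguably more self-contained proof of the intended statement.
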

\begin{proof} This is an immediate consequence of 
	Theorem~\ref{theo:specialcase} 
	and the fact that the added horizontal arcs of $N(\vec{m})$
	correspond to reticulation arcs in reticulate cherries.
	\hfill{$\Box$}

	\end{proof}

Interestingly, the corresponding result for arc-rich ploidy profiles does not hold as the core network depicted in Figure~\ref{fig:core-net-77-1-1-1}(i) shows. 
%
%


\section{A Viola \change{dataset}}
\label{sec:example}
In this section, we apply our findings to a simplified version
of a dataset studied in \cite{MJDBBBO12} to better understand the
evolutionary past of plants in the genus Viola.
The findings of the authors of that paper include a most
parsimonious PADRE reconstruction of allopolyploid relationships within
Viola, showing nine polyploidisation events (two of which involve more than two ancestral species) to explain the dataset's ploidy levels which range from
$2\times$ to $18\times$ \cite[Figure 4]{MJDBBBO12}. 
%
%
%
%
To help ensure readability,  we present a simplified version of 
that network in Figure \ref{fig:viola}(i). To obtain it, we focused on (i) retaining the polyploidization events suggested by \cite[Figure 4]{MJDBBBO12} and the directed paths in the network which involve them, and (ii) representing subtrees
in terms of single leaves. More precisely, we removed the taxa: {\em V.diffusa, V.papuana, V.selkirkii, V.somchetica, V.tuberifa, V.renifola, V.principis,
V.vaginata, V.epipsila, V.pallens, V.lanceolata, V.primulifolia, V.jalapa{\"e}nsis,
V.occidentalis, V.pedata, V.clauseniana, V.sagittata, V.pubescens, V.lobata}. Furthermore, we summarized the taxa {\em V.capillaris} and {\em V.rubella}
into the label rubellium as they formed a cherry and the taxa {\em V.laricicola, V.striata, V.stagnina, V.uliginosa, V.mirabilis, V.chelmea, V.collina, V.hirta} into the label viola as they  formed a subtree. Finally, since the network in \cite[Figure 4]{MJDBBBO12} contained two vertices 
with indegree three, we have resolved them as indicated in Figure~\ref{fig:viola}(i). More precisely, the resolved vertices are the vertex 
labelled $10\times $ and its parent labelled 
$8\times$ and also the vertex labelled $18\times $ and its parent labelled 
$14\times $.
 %
 %
 
\begin{figure}[h!]
	\centering
	\includegraphics[scale= 0.2]{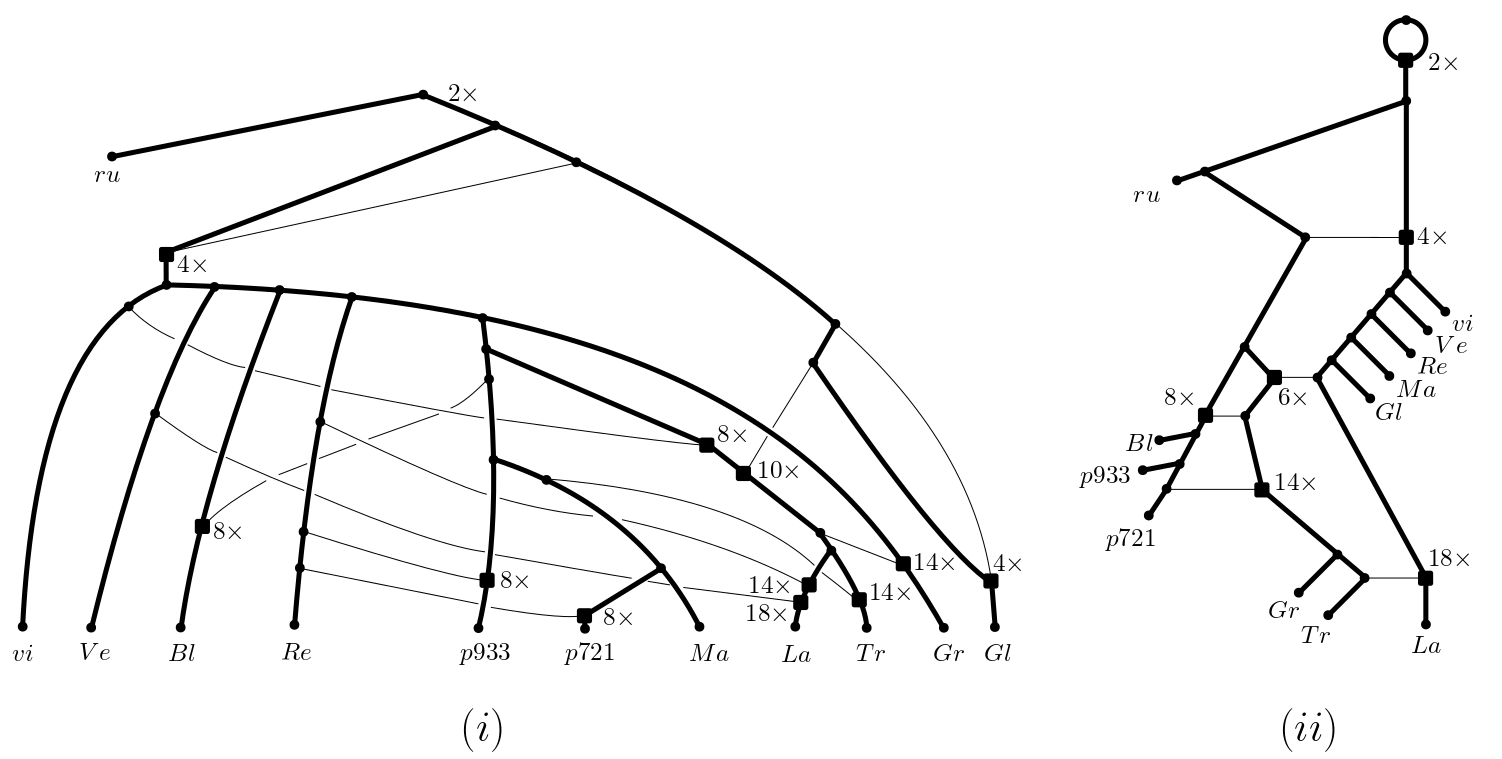}
	\caption{(i) A phylogenetic network on $X = \{$rubellium, viola,
		{\em V.verecunda}, {\em V.blande}, {\em V.repens}, {\em V.933palustris}, {\em V.721palustris}, {\em V.macloskeyi},
		{\em V.langsdorff}, {\em V.tracheliffolia}, {\em V.grahamii}, {\em V.glabella}$\}$ adapted from \cite{MJDBBBO12}. To improve clarity we include
		the ploidy level of each reticulation vertex. Apart from rubellium and viola which are denoted {\em ru} and {\em vi}, respectively,
		leaves are labelled by the first two
		characters of their name (omitting ``{\em V.}"). (ii) The realization $N(\vec{m})$
		of the ploidy profile $\vec{m}$ induced
		by the network in (i). Contrary to the network in (i),
		$N(\vec{m})$ is orchard. In each case, the graph obtained by removing the non-bold arcs is a base tree for $N(\vec{m})$. For ease of readability, the labels of the non-leaf vertices represent the number of directed paths from the root to that vertex in each of (i) and (ii).
		\label{fig:viola}
	}
\end{figure}

Although the network pictured in Figure~\ref{fig:viola}(i) clearly represents the 
ploidy profile $\vec{m} = (18,14,14,10,8,8,8,4,4,2)$
by taking the number of directed paths from the root to each leaf,
from  a formal point of view, it is not a realization of 
$\vec{m}$  since the ancestral species at the root  is assumed to be diploid. This shortcoming of our framework can however easily be 
rectified by adding a bead via an incoming arc to the root of the network.

As was established in \cite[Theorem 2]{HM21}, the minimum number of 
reticulation vertices required
by a phylogenetic network to realize $\vec{m}$ is 5. Since 
the phylogenetic network $N(\vec{m})$ pictured in Figure~\ref{fig:viola}(ii)
realizes $\vec{m}$ using five reticulation vertices
it follows that it is optimal with regards to this property.
Furthermore since none of the five reticulation vertices 
are contained in a bead, they all represent allopolyploidization
events. Finally, $N(\vec{m})$
admits a HGT-consistent labelling which implies that
$\vec{m}$ is orchard. In turn, this implies that 
a phylogenetic network that  realizes $\vec{m}$ can be obtained from a phylogenetic tree (in this case without beads) 
by adding horizontal arcs. Given these attractive features it could be of interest to better understand to what
extent the network $N(\vec{m})$ can be used to inform the construction of a multiple-labelled tree 
such as the one underpinning the network in Figure~\ref{fig:viola}(i). As mentioned above already,
constructing such a tree
is not easy in general \cite{HLMS09}.

Using the insights from Section~\ref{sec:comparison}
to help assess how different the two networks in
Figure~\ref{fig:viola} are, assume that the chosen
distance measure for comparing two multiple-labelled trees is the SPR-distance.
Then by first applying split-operations to each of 
the two networks pictured in Figure~\ref{fig:viola} until
a multiple-labelled tree is obtained and then transforming 
one of the two obtained multiple-labelled trees into the other
via a sequence $\chi_{SPR}$ of multiple-labelled trees such that
two consecutive multiple-labelled trees in $\chi_{SPR}$ have SPR-distance 1 yields an upper bound of 26 on the  $D_{\mathcal P(\vec{m})}$-distance between the two networks. 
\bigskip

\section{Concluding remarks}

In this paper, we have pushed back the current limits of the emerging field of 
{\em Polyploid Phylogenetics} \cite{R21}
by studying combinatorial properties of a ploidy profile $\vec{m}$ on some set
$X$. Denoting by $N(\vec{m})$ the phylogenetic network obtained as a slightly modified version of the construction of a phylogenetic network that appeared in \cite{HM21}, we show that $N(\vec{m})$
may be viewed as a generator of ploidy profile space $\mathcal P(\vec{m})$
in the sense that any other realization $N$  of $\vec{m}$ can be obtained from it by going along the edges of a path from $N(\vec{m})$ to $N$ in $\mathcal P(\vec{m})$
(Proposition~\ref{prop:connected}). Furthermore, $N(\vec{m})$  may be thought of as a phylogenetic tree with beads  to which additional arcs have been added (Theorem~\ref{theo:treebased}) and at most one of these additional arcs cannot be drawn as a horizontal arc
(Theorem~\ref{theo:arc-rich}). 
Furthermore, we establish a close link between the concept of a cherry reduction sequence for  $N(\vec{m})$
and the simpification sequence for $\vec{m}$, a concept which underpins the construction of $N(\vec{m})$   (Theorem~\ref{theo:specialcase}). As an immediate consequence, we also have that the ploidy profile space for the ploidy profiles described in Corollary~\ref{cor:special-case} can be generated from a phylogenetic tree with beads and only horizontal arcs added\change{.}  Finally, we illustrate our findings by means of a real biological dataset.

Although our results are encouraging, numerous open questions remain. From a more biological point of view, these include understanding how well the
$D_{\mathcal P(\vec{m})}$-distance captures similarity between different realizations of a ploidy profile
$\vec{m}$. 
In the context of this it should be noted that the edit-distance type nature of 
the $D_{\mathcal P(\vec{m})}$-distance implies that, in general,
it might be computationally difficult to compute it. This immediately begs the more mathematical question of how to bound it. 

Also, it might be
useful to explore diameter bounds for the $D_{\mathcal P(\vec{m})}$-distance and the effect the choice of
distance measure on multiple-labelled trees has. The same
also holds when replacing the sequence of split operations
to obtain a multiple-labelled tree from a phylogenetic network
with the ``unfold'' operation for phylogenetic networks.
Essentially, this operation associates a multiple-labelled tree 
to a phylogenetic network $N$ by recording for every
leaf $x$ of $N$ all directed paths from the root of $N$
to $x$ (see e.\,g.\,\cite{HM06,HOLM06} for details about this operation). \change{It may also be interesting to explore the relationship between the simplification sequence and trinets \cite{HM13}. For example, how are the classes of phylogenetic networks that can be determined from trinets related to the class of phylogenetic networks with complete cherry reduction sequences?}

In a different direction, it might be of interest to see if the 
results and approaches presented here can be extended to include further evolutionary processes
such as aneuploidy whereby only a subset of the chromosome set of a genome (as opposed to the whole set of chromosomes) occurs multiple times. This could potentially involve representing a polyploid species not in terms of a ploidy level but in terms of a vector with each component representing the number of times the chromosome indexing it is found. A ploidy profile would in that case not be a vector of positive integers but a vector of vectors, each of them indexed by a species. Although attractive at first glance, this would require finding a new way of realizing a ploidy profile in terms of a phylogenetic network. 

From a more mathematical point of view, it might also be interesting to investigate if a ploidy profile $\vec{m}$ whose simplification sequences terminates in a practical ploidy profile can be characterized without having to compute the  simplification sequence of 
$\vec{m}$ first.
This might require a better understanding of the link between
simplifications sequences and  ploidy profiles that are orchard. The availability of such a characterization could potentially lead to a fast way to decide if a ploidy profile $\vec{m}$ can be realized by an
orchard.

In the case of beadless phylogenetic networks, relationships between various types of properties are known. For example
every orchard is also tree-based \cite[Corollary~2]{IJJM21}
and also every tree-child network is orchard \cite{BS16}.
Tree-child networks are defined as those (beadless) phylogenetic networks for which, 
for every one of its vertices $v$, there exists a directed path $P$ to a leaf so that no vertex on $P$
other than potentially $v$ is a reticulation vertex. Extending this property canonically to our types of
networks by allowing $P$ to contain reticulation vertices in beads results in a natural way to extend the tree-child concept to our types of phylogenetic networks. More precisely, we call a ploidy profile {\em tree-child} if it has a realization that is tree-child when
reticulation vertices in beads are ignored. 

As is easy to see, if the construction of 
$N(\vec{m})$ is initialized with the core network $\mathcal B(\vec{m})$ then any ploidy profile of the form 
$(2^n,2^{n-1},\ldots,2^1,2^0)$ with $n\geq 1$ is tree-child. However at the same time \orange{an exhaustive search for} the ploidy profile
$\vec{m}=(3)$ shows that not every ploidy profile is tree-child.  It might therefore be interesting
to characterize tree-child ploidy profiles. This might involve better understanding properties of the core network for $\vec{m}$
with which the construction of $N(\vec{m})$ is initialized
(see Figure~\ref{fig:twocores12} for two alternative choices
of a core network of the ploidy profile $(12,1)$ one of
which is $\mathcal B(\vec{m})$ and the other is not of the form $\mathcal B(\vec{m})$).
 As part of this it might be tempting to first focus on core networks obtained from a prime factor decomposition of the single component of a strictly simple ploidy profile (see also  \cite{HM21} for more on this). 

\section*{Acknowledgement}
Both  authors would like to thank the two anonymous reviewers for their helpful comments to improve an earlier version of the paper.

\section*{Data availability statement}
Apart from data already publicly available (see \cite{MJDBBBO12}), the manuscript has no data 
associated to it.

 \bibliographystyle{plain}
 \bibliography{bibliography2022-02-09.bib}
 
 \end{document}